\title             {Bicriteria approximation for {\em k}-edge-connectivity}
\titlerunning{Bicriteria approximation for {\em k}-edge-connectivity}
\author{Zeev Nutov}{The Open University of Israel}{nutov@openu.ac.il}{https://orcid.org/0000-0002-6629-3243}{}
\author{Reut Cohen}{The Open University of Israel}{reut6511@gmail.com}{}{}
\authorrunning{Zeev Nutov and Reut Cohen}
\begin{document}

\maketitle

\newcommand {\ignore} [1] {}

% \newtheorem{fact}[lemma]{Fact}
% \newtheorem{assumption}[lemma]{Assumption}

% Calligraphic 
\def\CC  {{\cal C}}
\def\LL  {{\cal L}}
\def\RR  {{\cal R}}
\def\TT  {{\cal T}}

% Greek letters
\def\de  {\delta}
\def\be  {\beta}
\def\th   {\theta}
\def\De  {\Delta}

% Sets
\def\empt {\emptyset}
\def\sem  {\setminus}
\def\subs {\subseteq}

% For brevity
\def\f  {\frac}
\def\opt {\sf opt}
\def\span {\sf span}

% Problems
\def\kECSS  {{\sc $k$-ECSS}}  % $k$-Edge Connected Spanning Subgraph
\def\kECSM {{\sc $k$-ECSM}}  % $k$-Edge Connected Spanning Multisubgraph
\def\kCSS {{\sc $k$-CSS}}        % $k$-Connected Spanning Subgraph

\keywords{$k$-edge-connected subgraph, bicriteria approximation, iterative LP-rounding}

\begin{abstract}
In the {\sc  $k$-Edge Connected Spanning Subgraph} ({\kECSS}) problem 
we are given a (multi-)graph $G=(V,E)$ with edge costs and an integer $k$, 
and seek a min-cost $k$-edge-connected spanning subgraph of $G$. 
The problem admits a $2$-approximation algorithm and no better approximation ratio is known.
Recently, Hershkowitz, Klein, and Zenklusen [STOC 24] gave a bicriteria $(1,k-10)$-approximation algorithm 
that computes a $(k-10)$-edge-connected spanning subgraph 
of cost at most the optimal value of a standard Cut-LP for {\kECSS}. 
We improve the bicriteria approximation to $(1,k-4)$ 
and also give another non-trivial bicriteria approximation $(3/2,k-2)$.

The {\sc $k$-Edge-Connected Spanning Multi-subgraph} ({\kECSM}) problem is almost the same as {\kECSS},
except that any edge can be selected multiple times at the same cost.
A $(1,k-p)$ bicriteria approximation for {\kECSS} w.r.t. Cut-LP implies 
approximation ratio  $1+p/k$ for {\kECSM}, hence our result also improves 
the approximation ratio for {\kECSM}.
\end{abstract}

%%%%%%%%%%%%%%%%%%%%
\section{Introduction} \label{s:intro}
%%%%%%%%%%%%%%%%%%%%

A graph is {\bf $k$-edge-connected} if it contains $k$ edge-disjoint paths between any two nodes.
We consider the following problem.

\begin{center} \fbox{\begin{minipage}{0.98\textwidth}
\underline{\sc  $k$-Edge-Connected Spanning Subgraph} ({\kECSS}) \\
{\em Input:} \ \ A (multi-)graph $G=(V,E)$ with edge costs and an integer $k$. \\
{\em Output:} A  min-cost $k$-edge-connected spanning subgraph $J$ of $G$. 
\end{minipage}} \end{center}

The problems admits approximation ratio $2$ by a reduction to a bidirected graph (c.f. \cite{K}), 
and no better approximation ratio is known.
Recently, Hershkowitz, Klein, and Zenklusen \cite{HKZ} (henceforth HKZ) 
gave a bicriteria $(1,k-10)$-approximation algorithm 
that computes a $(k-10)$-edge-connected spanning subgraph of cost
at most the optimal value of a standard Cut-LP for {\kECSS}.  
Our first result improves the connectivity approximation to $k-4$.

\begin{theorem} \label{t:main}
{\kECSS} admits a bicriteria approximation ratio $(1,k-4)$ w.r.t. the Cut-LP.
\end{theorem}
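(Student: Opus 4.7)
The plan is to follow HKZ's iterative LP-rounding framework and improve the combinatorial analysis so that the connectivity loss drops from $10$ to $4$. I maintain a committed edge set $H$ and a residual instance whose cut requirement on $S$ is $r(S) = k - |H\cap\de(S)|$ and in which some tight cut constraints may have been ``dropped.'' At each iteration I solve the residual Cut-LP to obtain an extreme-point optimum $x^*$ and either commit an edge $e$ with $x^*_e = 1$ to $H$ (removing it from the LP and decrementing $r(S)$ for every cut $S$ containing $e$), or, if no such edge exists, drop a single tight cut constraint. The algorithm terminates once the LP has no variables left, returning $H$.

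The cost half of the bicriteria is by a standard invariant: because the restriction of $x^*$ to the remaining edges remains feasible for the new LP whenever the committed edge has $x^*_e = 1$, the residual LP value drops by exactly $c_e$; dropping a tight cut only relaxes the LP. Induction yields $c(H) \leq$ the initial LP value. The $(k-4)$-connectivity half reduces to showing that for every original cut $S$, its constraint is dropped at most $4$ times during the execution.

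The core technical step is the structural lemma: for every extreme point $x^*$ of the residual LP, either some edge has $x^*_e = 1$, or there is a tight cut whose drop will not push any cut past $4$ total drops. I would prove this by uncrossing the tight constraints of $x^*$ into a laminar family $\LL$ and running a token argument with a constant number of tokens per fractional edge. HKZ obtain loss $10$ via a fairly generic counting; our target of $4$ would require (i) exploiting that a tight cut of requirement $k$ forces the sum of $x^*$-values across it to equal $k$ exactly, severely restricting how many small-value edges it can contain, and (ii) carefully pairing each dropped cut with structurally close laminar neighbors so that no cut accumulates more than $4$ drops.

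The main obstacle is this refined token argument: a straightforward assignment only recovers HKZ's constant, and pushing the bound down to $4$ will likely require a case analysis distinguishing leaves from internal sets of $\LL$, together with a possibly fractional token distribution, to extract the most from the extremality condition.
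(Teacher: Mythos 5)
Your proposal follows the outer shell of iterative LP-rounding, but it misses the central mechanism that makes the paper's bound work and it openly admits the key technical lemma is unproved, so there is a genuine gap.

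First, your framing of the connectivity loss as ``each original cut is dropped at most 4 times'' is not how the paper (or HKZ) obtains the additive guarantee, and a direct argument of that shape is unlikely to succeed: with only ``commit an edge with $x^*_e=1$ or drop one tight cut,'' an extreme point of the residual Cut-LP need not have any integral edge, and there is no obvious a priori bound on how often a cut can reappear as tight and be dropped across iterations. The paper handles this differently: it contracts small cores $C$ (with $d_E(C)\in\{2,3\}$) into new nodes and stores these nodes in a set $U$, charging the contracted singleton only a $-2$ relaxation; and, crucially, it introduces \emph{ghost edges} of capacity $2$ between pairs $u,v\in U$ with $d_I(u,v)\ge\mu$ and $d_H(u,v)=0$. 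The ghost-edge step is exactly what rescues the algorithm when no integral edge and no small core exist, which happens precisely when no laminar $x$-defining family exists (Lemma~\ref{l:main} and Lemma~\ref{l:no}). Your proposal has no analogue of ghost edges, and without it the dichotomy ``integral edge or a safely droppable cut'' is simply false for extreme points of this LP.

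Second, the place where you say ``a straightforward assignment only recovers HKZ's constant, and pushing the bound down to $4$ will likely require a case analysis \dots together with a possibly fractional token distribution'' is precisely where a real proof is needed and none is given. The paper does \emph{not} get $k-4$ by a sharper token count on a laminar family: the Jain-style token argument is used only to show a core with $d_E(C)\le 3$ exists in the laminar case (Lemma~\ref{l:L}), with no tightening over the standard ``$\le 3$'' bound. The improvement from $k-10$ to $k-4$ comes instead from (a) never adding parallel ghost edges, so $d_H(u)\le1$ for $u\in U$; (b) the bookkeeping in Lemmas~\ref{l:C}, \ref{l:cut}, and \ref{l:h} giving $d_J(S)\ge\max\{k-2d_H(S)-2,\ d_H(S)\cdot\mu\}$ with $\mu=\lceil(k-3)/2\rceil$; and (c) a separate crossing analysis (Lemmas~\ref{l:bR}--\ref{l:count}) for cuts that overlap contracted sets, combining a $\lceil(k-6)/2\rceil$ and a $\mu$ contribution to reach $k-4$. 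None of these ingredients appears in your sketch. As written, your proposal is an outline of the generic framework with the hard lemma left as a wish; to close the gap you would need to introduce the ghost-edge relaxation, prove the dichotomy (small core vs. eligible pair in $U$) for extreme points, and then carry out the degree-counting that separates the laminar-compatible cuts from the overlapping ones.
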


The {\sc $k$-Edge-Connected Spanning Multi-subgraph} ({\kECSM}) problem 
is almost the same as {\sc $k$-ECSS}, except that any edge can be selected multiple times at the same cost.
For {\kECSM}, approximation ratios $3/2$ for $k$ even and $3/2+O(1/k)$ for $k$ odd 
were known long time ago \cite{FJ1,FJ2}, and it was conjectured by Pritchard \cite{P} that 
{\kECSM} admits approximation ratio $1+O(1/k)$. 
HKZ \cite{HKZ} showed that {\kECSM} has an $\left(1+\Omega(1/k)\right)$-approximation threshold 
and that a $(1,k-p)$ bicriteria approximation for {\kECSS} 
w.r.t. the standard Cut-LP implies approximation ratio  $1+p/k$ for {\kECSM}. 
Thus the novel result of HKZ \cite{HKZ} implies an approximation ratio $1+10/k$ for {\kECSM}
(improving the previous ratio $1+O(1/\sqrt{k})$ of  \cite{KKGZ}),
and we reduce the approximation ratio to $1+4/k$.

\begin{corollary}
{\kECSM} admits approximation ratio $1+4/k$.
\end{corollary}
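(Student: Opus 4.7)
The proof is a direct application of the HKZ reduction highlighted in the introduction: any bicriteria $(1, k-p)$-approximation for {\kECSS} with respect to the standard Cut-LP yields approximation ratio $1 + p/k$ for {\kECSM}. Since Theorem~\ref{t:main} supplies such a bicriteria result with $p=4$, the corollary follows by plugging $p=4$ into this black-box statement. So the plan is simply to invoke the cited reduction.

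For completeness, I would trace through the reduction explicitly. Let $x^*$ be an optimal Cut-LP solution for the input {\kECSM} instance, of cost $\opt_{LP} \le \opt$. Since no optimum of {\kECSM} uses more than $k$ copies of any edge, $x^*$ may also be viewed as a fractional solution for the {\kECSS} Cut-LP on the graph in which each edge is duplicated $k$ times. Applying the bicriteria algorithm of Theorem~\ref{t:main} then produces a $(k-4)$-edge-connected spanning multi-subgraph $J_1$ with $c(J_1) \le \opt_{LP}$. To lift the connectivity from $k-4$ back up to $k$, observe that the scaled vector $(4/k)\, x^*$ satisfies the cut constraints for $4$-edge-connectivity and has cost $(4/k)\,\opt_{LP}$; in the multi-graph setting, standard results on edge-connectivity augmentation permit rounding this into an integral $4$-edge-connected multi-subgraph $J_2$ with $c(J_2) \le (4/k)\,\opt_{LP}$. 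The union $J_1 \cup J_2$ is $k$-edge-connected with total cost at most $(1 + 4/k)\,\opt_{LP} \le (1 + 4/k)\,\opt$.

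All of the technical difficulty is absorbed into Theorem~\ref{t:main} and the HKZ reduction machinery; the corollary itself contains no new obstacle, essentially being a one-line substitution of our improved connectivity slack $p=4$ into the known implication between bicriteria {\kECSS} guarantees and {\kECSM} approximation.
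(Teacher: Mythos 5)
Your first paragraph -- citing the HKZ reduction that a $(1,k-p)$ bicriteria approximation for {\kECSS} w.r.t.\ Cut-LP implies a $(1+p/k)$-approximation for {\kECSM}, and plugging in $p=4$ -- is exactly what the paper does (the paper itself offers no further proof of the corollary and treats the reduction as a black box). If you had stopped there, you would match the paper.

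However, the ``for completeness'' elaboration is not the HKZ reduction and contains a genuine gap. Your plan is to decompose: run the bicriteria algorithm at level $k$ to get a $(k-4)$-edge-connected $J_1$ of cost at most $\opt_{LP}$, then separately build a $4$-edge-connected $J_2$ by ``rounding'' the scaled LP solution $(4/k)x^*$ at no cost increase, and take the union. The rounding step is unjustified and in fact false as stated: turning a fractional Cut-LP solution for $4$-edge-connectivity into an integral $4$-edge-connected multi-subgraph of the \emph{same} cost is a $1$-approximation for $4$-ECSM w.r.t.\ its LP, which is not available (the paper itself recalls, via \cite{HKZ}, that {\kECSM} has a $(1+\Omega(1/k))$-approximation threshold, so no constant $k$ admits an exact LP rounding). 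With the best known $3/2$ or $2$ factors your union would cost $(1+6/k)\opt$ or $(1+8/k)\opt$, not $(1+4/k)\opt$. The actual HKZ reduction goes in the opposite direction: scale the {\kECSM} LP solution \emph{up} by a factor $\frac{k+p}{k}$ so that it is feasible for the $(k+p)$-ECSS Cut-LP on a suitable multigraph, then run the $(1,(k+p)-p)$ bicriteria algorithm at level $k+p$; this outputs a $k$-edge-connected multi-subgraph of cost at most $\frac{k+p}{k}\opt_{LP}\le(1+p/k)\opt$. That single upward-scaled run replaces your two-phase decomposition and avoids the rounding obstacle entirely.
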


In addition, we also study bicriteria approximation ratios when 
the cost approximation is bellow $2$, and prove the following.

\begin{theorem} \label{t:main'}
{\kECSS} admits a bicriteria approximation ratio $(3/2,k-2)$.
\end{theorem}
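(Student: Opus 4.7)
The plan is to derive \cref{t:main'} from \cref{t:main} by invoking it at the inflated connectivity level $k+2$ and exploiting the LP-scaling identity $(k+2)/k \le 3/2$, valid for every $k\ge 4$. Let $x^*$ be an optimal Cut-LP solution at level $k$ of cost $\opt_{LP}$, and consider the scaled vector $y := \tfrac{k+2}{k}\, x^*$. Then $y(\delta(S)) \ge k+2$ on every cut $S$, and $c(y) \le (3/2)\,\opt_{LP}$. If $y$ were a legal Cut-LP solution at level $k+2$, feeding it to \cref{t:main} would immediately yield an integer $(k+2-4)=(k-2)$-edge-connected spanning subgraph of cost at most $(3/2)\,\opt_{LP}$, which is exactly the conclusion of \cref{t:main'}.

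The hard part is the box constraint $x_e \le 1$: scaling by $(k+2)/k > 1$ may push $y_e$ above $1$ on the ``heavy'' edges $H := \{e : x^*_e > k/(k+2)\}$. I plan to include $H$ in the output integrally---this costs at most $(k+2)/k \le 3/2$ times the LP contribution of those edges, hence at most $(3/2)\,\opt_{LP}$ for this portion---and to run the iterative-rounding engine underlying \cref{t:main} only on the residual ``light'' instance $E\setminus H$, where $y|_{E\setminus H}$ is a bona fide fractional vector satisfying the box constraint. The residual cut-demand becomes $\max\{0,\, (k+2) - |H\cap\delta(S)|\}$, and the main technical effort is to verify that the LP-rounding/token argument of \cref{t:main} remains valid in the presence of a prescribed integral part $H$, still losing at most $4$ units of edge-connectivity relative to the inflated target $k+2$ and thereby delivering $(k-2)$-edge-connectivity overall.

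For small $k$, the inflated-level route is not required. The cases $k\le 2$ are trivial. For $k \in \{3,\ldots,8\}$, combine the classical $2$-approximation for $(k-2)$-ECSS (via bidirected rounding) with the scaling identity $\opt_{LP}(k-2) \le \tfrac{k-2}{k}\,\opt_{LP}(k)$: this produces a $(k-2)$-edge-connected subgraph of cost at most $\tfrac{2(k-2)}{k}\,\opt_{LP}(k) \le (3/2)\,\opt_{LP}(k)$, since $2(k-2)/k \le 3/2$ exactly when $k \le 8$. The remaining range $k \ge 9$ is covered by the inflated-level argument above, where the scaling identity $(k+2)/k \le 3/2$ has the most room.
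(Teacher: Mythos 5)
Your reduction to Theorem~\ref{t:main} at the inflated level $k+2$ has a feasibility gap that cannot be patched. The guarantee of Theorem~\ref{t:main} invoked at level $k+2$ presupposes a feasible $(k{+}2)$-ECSS Cut-LP, and your plan is to certify feasibility and cost via $y = \frac{k+2}{k} x^*$ with the heavy edges $H$ frozen integrally. But $G$ need not be $(k{+}2)$-edge-connected. If $|\delta_G(S)| = k$ for some cut $S$ (any minimum cut of a tight instance), feasibility of the level-$k$ Cut-LP already forces $x^*_e = 1$ on every $e \in \delta_G(S)$, so all of $\delta_G(S)$ lands in $H$; the residual demand $(k+2)-|H\cap\delta(S)| = 2$ is then positive while no edges remain across $S$, i.e.\ the residual LP is infeasible. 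Even when it is feasible, $y|_{E\setminus H}$ need not satisfy the residual constraints: whenever $x^*(\delta(S))=k$ and $\delta(S)$ contains a heavy edge, $y(\delta(S)\setminus H) = (k+2) - y(\delta(S)\cap H) < (k+2) - |H\cap\delta(S)|$, since each heavy edge capped to~$1$ strictly loses $y_e - 1 > 0$ from that cut. So the residual LP optimum is not bounded by $c(y|_{E\setminus H})$, and the claimed bound $c(H) + (\text{residual LP}) \le \tfrac{3}{2}\,{\sf opt}_{LP}(k)$ is unsupported. The acknowledged ``main technical effort'' is therefore not a matter of verifying that the token argument tolerates a prescribed $H$; the reduction itself fails before the token argument begins.

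The paper proves Theorem~\ref{t:main'} by a separate iterative algorithm, Algorithm~\ref{alg:main'}, run at the original level $k$ rather than by a reduction to Theorem~\ref{t:main}. The modifications from Algorithm~\ref{alg:main} are: the residual function~(\ref{f'}) relaxes by $1$ per ghost edge and $1$ per $U$-singleton (not $2$ and $2$), halving the connectivity loss per relaxation event; edges are rounded to $1$ as soon as $x_e \ge 2/3$, costing a factor at most $3/2$ while guaranteeing that every surviving extreme point has all entries below $2/3$, which forces $f(C)=1$ for every core found and hence $d_I(C) \ge k-2$; and the ghost-edge threshold is raised to $\mu = \lceil (k-1)/2 \rceil$. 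Crucially, the $\ge 2/3$ rounding is interleaved with the constraint relaxations and is justified locally by the structure of each successive extreme point, preserving LP feasibility throughout; a one-shot up-front freeze of heavy edges destroys exactly that invariant. Your small-$k$ branch ($k \le 8$, via the $2$-approximation at level $k-2$ and the downward scaling ${\sf opt}_{LP}(k-2) \le \frac{k-2}{k}\,{\sf opt}_{LP}(k)$) is correct, but it does not reach $k \ge 9$.
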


Our paper leaves an open question whether for some constant $\epsilon >0$ 
it is possible to achieve a bicriteria approximation ratio $(2-\epsilon, k-1)$.
We note that while HKZ presented a novel original approach for attacking the problem
and gave the first additive bicriteria approximation,
our contribution here is more modest, as follows.
\begin{itemize}
\item
We present a simpler algorithm that uses the HKZ \cite{HKZ} idea and a much simpler proof, 
based on new properties of {\kECSS} Cut-LP extreme point solutions; see Lemmas \ref{l:main}.
\item
This enabled us to improve the connectivity approximation from $k-10$ to $k-4$. 
\item
We use the same idea to present another nontrivial bicriteria approximation $(3/2,k-2)$.
\end{itemize}

We now survey some related work.
{\kECSS} is an old fundamental problem in combina\-torial optimization and network design.
The case $k=1$ is the {\sc MST} problem, while for $k = 2$ the problem is 
MAX-SNP hard even for unit costs \cite{F}.
Pritchard \cite{P} showed that  there is a constant $\epsilon > 0$ such that for any $k \ge 2$,
{\kECSS} admits no $(1+\epsilon)$-approximation algorithm, unless P = NP.
No such hardness of approximation was known for {\kECSM},
and Pritchard \cite{P} conjectured that {\kECSM} admits approximation ratio $1+O(1/k)$.
This conjecture was resolved by HKZ \cite{HKZ}, 
who also proved a matching hardness of approximation result.

The {\kECSS} problem and its special cases have been extensively
studied, c.f \cite{Lov,F-aug,FJ1,FJ2,KV,CT,CL,K,GGTW, GG,P, SV,TZ,HKZ} for only a small sample of papers in the field. 
Nevertheless, the current best known approximation ratio for {\kECSS} is $2$, the same as was four decades ago. 
The $2$-approximation is obtained by bidirecting the edges of $G$, 
computing a min-cost directed spanning subgraph that contains $k$ edge-disjoin dipaths 
to all nodes from some root node (this can be done in polynomial time \cite{E}), 
and returning its underlying graph.
A $2$-approximation can also be achieved with the iterative rounding method \cite{J},
that gives a $2$-approximation also for the more general {\sc Steiner Network} problem, 
where we seek a min-cost subgraph that contains 
$r_{uv}$ edge-disjoint paths for every pair of nodes $u,v \in V$.
The directed version of {\kECSS} was also studied, c.f. \cite{E,GGTW,CT,K}, 
and for this case also no better than $2$-approximation is known, except for special cases. 

Bicriteria approximation algorithms were widely studied for 
degree constrained network design problems such as {\sc Bounded Degree MST} and {\sc Bounded Degree Steiner Network},
c.f. \cite{SL,LZ,LRS} and the references therein.
A particular case of a bicriteria approximation is when only one parameter is relaxed
while the cost/size of the solution is bounded by a budget $B$.
For example, the $(1-1/e)$-approximation for the {\sc Budgeted Max-Coverage} problem 
can be viewed as a bicriteria approximation $(1, 1-1/e)$ for {\sc Set Cover}, 
where only a fraction of $1-1/e$ of the elements is covered by ${\sf opt}$ sets.  
% where ${\sf opt}_{SC}$ is the optimal {\sc Set cover} solution value.
Similarly, in the budgeted version {\sc Budgeted ECSS} of {\kECSS}, instead of $k$ we are given a budget $B$ 
and seek a spanning subgraph of cost at most $B$ that has maximum edge-connectivity $k^*$.

One can view  Theorem~\ref{t:main} as a $k^*-4$ additive approximation for {\sc Budgeted ECSS}, where 
$k^*$ is the maximum edge-connectivity under the budget $B$.
We note that budgeted connectivity problems were studied before for unit costs.
Specifically, in the {\sc Min-Size \kECSS} and {\sc Min-Size \kCSS} problems 
we seek a $k$-edge-connected and $k$-node-connected, respectively, spanning subgraph with a minimal number of edges. 
Nutov \cite{N-small} showed that the following simple heuristic (previously analyzed by Cheriyan and Thurimella \cite{CT}) 
is a $(1,k-1)$ bicriteria approximation for {\sc Min-Size \kCSS}:
compute a min-size $(k-2)$-edge-cover $I_{k-2}$ (a spanning subgraph of minimal degree $\ge k-2$) 
and then augment it by an inclusion minimal edge set $F \subs E$ such that $I_{k-2} \cup F$ is $(k-1)$-connected. 
This $(1,k-1)$ bicriteria approximation implies a $k^*-1$ approximation for {\sc Budgeted Min-Size \kCSS},
which is tight, since the problem is NP-hard.
A similar result for {\kECSS}, for large enough $k$, can be deduced from the work of Gabow and Gallagher \cite{GG}. 

One can obtain a bicriteria approximation $(2/\th,\lfloor k/\th \rfloor)$  for {\kECSS} for any $\th \ge 1$,
by just solving the {\sc $\lfloor k/\th \rfloor$-ECSS} problem. 
This is since the $2$ approximation is w.r.t. the Cut-LP. 
Specifically, if $x$ is a {\kECSS} Cut-LP solution, then $x/\th$ is a  {\sc $\lfloor k/\th \rfloor$-ECSS} Cut-LP solution,
thus a $2$-approximation w.r.t. Cut-LP for {\sc $\lfloor k/\th \rfloor$-ECSS} computes a $\lfloor k/\th \rfloor$-connected subgraph
of cost $\le 2/\th \cdot {\sf opt}_k$.
Similar ``LP-scaling'' gives a bicriteria approximation $(2/\th, \lfloor r_{uv}/\th \rfloor)$ for any $\th \ge 1$
for the {\sc Steiner Network} problem. 

The node connectivity version {\kCSS} of {\kECSS} was also studied extensively 
\cite{KN,FL,ZN-comb,CT, N-small,CV,ZN-4,N-TCS,N-book};
it admits approximation ratio $4+\epsilon$ when $k$ is bounded by a constant, but for general $k$ 
only a polylogarithmic approximation is known \cite{FL,ZN-comb}. 
The {\sc Survivable Network Design Problem} ({\sc SNDP}) is the node-connectivity version of {\sc Steiner Network}
when there should be $r_{uv}$ internally node disjoint path between every $u,v \in V$;
{\kCSS} is a particular case when $r_{uv}=k$ for all $u,v \in V$.
The {\sc Rooted SNDP} is another particular case of {\sc SNDP} where we require $k$ disjoint paths from a root $s$ to 
every node in a given set $T$ of terminals, and the {\sc $k$-Out-Connected Spanning Subgraph} ({\sc $k$-OCSS}) problem 
is a particular case of {\sc Rooted SNDP} when $T=V \sem \{s\}$.
Since the best approximation ratios for these problems are w.r.t. to the Cut-LP (a.k.a. ``biset LP'') they can be used 
to obtain bicretiria approximations using the method described above.
We summarize the best approximation ratios for these problems 
and the bicriteria approximation ratios that can be derived for them 
by a simple ``LP-scaling'' in the following table. 

\begin{table}[htbp]  \label{tbl:cost} 
\begin{center} 
\begin{tabular}{|l|c|c|c|c|c|} 
\hline
{\bf problem}                          & {\sc SNDP}                              & {\sc Rooted SNDP}                   & {\sc $k$-CSS}                                                                   & {\sc $k$-OCSS}                                    
\\\hline  
{\bf approximation}                & $O(k^3 \log n)$ \cite{CK}      & $O(k\log k)$ \cite{N-rooted}      & $4+\f{4\lg k+1}{\lg n-\lg k}$ \cite{ZN-4}                           & $2$ \cite{FT}        
\\\hline
{\bf bicriteria appr.}                & $\f{1}{\th} O((k/\th)^3 \log n)$ & $\f{1}{\th} O((k/\th) \log (k/\th))$ & $\f{1}{\th} \left(4+\f{4\lg (k/\th)+1}{\lg n-\lg (k/\th)}\right)$ & $2/\th$         
\\\hline
\end{tabular}
\end{center}
\caption{LP and bicriteria approximation ratios for node connectivity problems.
The third row gives the approximation cost that can achieve connectivity $\lfloor k/\th \rfloor$ 
($\lfloor r_{uv}/\th \rfloor$ for {\sc SNDP}) for any $\th \ge 1$.}
\end{table}

\vspace*{-0.4cm}
This paper is organized as follows. 
In Sect.~\ref{s:algo} we describe the iterative relaxation algorithm.
In Sect.~\ref{s:cut} we prove the connectivity guarantee, and 
in Sect.~\ref{s:main} we prove a certain property of extreme point solutions that is used in the algorithm.
Sect.~\ref{s:main'} gives a bicriteria approximation $(3/2, k-2)$.

%%%%%%%%%%%%%%%%%%%%%%%%%%%%%%%%%%%
\section{The Algorithm} \label{s:algo}
%%%%%%%%%%%%%%%%%%%%%%%%%%%%%%%%%%%

For an edge set $F$ and disjoint node sets $S,T$ let $\de_F(S,T)$ denote the set of edges in $F$ with 
one end in $S$ and the other end in $T$, and let $d_F(S,T)=|\de_F(S,T)|$.
Let $\de_F(S)=\de_F(S,\bar{S})$ and $d_F(S)=|\de_F(S)|$,
where $\bar{S}=V \sem S$ is the node complement of $S$. 
% The default subscript is $E$, e.g., $\de(S)=\de_E(S)$.
For $x \in \mathbb{R}^E$ let $x(F)=\sum_{e \in F} x(e)$. % and $d_x(S)=\sum_{e \in \de(S)} x_e$.
We will often refer to a proper subset $S$ of $V$ (so $\empt \ne S \subset V$) as a {\bf cut}. 
A standard Cut-LP for {\kECSS} is:
\[ \displaystyle
\begin{array} {lllllll} 
& \min              & \displaystyle c^T \cdot x & \\
& \ \mbox{s.t.} & x(\de_E(S)) \ge k                & \forall \empt \ne S \subset V \\
&                      & 0 \le x_e \le 1                  & \forall e \in E                                                                 
\end{array}
\]

As in HKZ \cite{HKZ}, we will use the iterative relaxation method 
(that previously was used for the {\sc Degree Bounded MST} problem \cite{SL}).
This means that while $E \ne \empt$, 
we repeatedly compute an extreme point optimal solution $x$ to the Cut-LP above, 
and do at least one of the following steps:
\begin{enumerate}
\item
Remove from $E$ an edge $e$ with $x_e=0$.
\item
Add to a partial solution $I$ an edge $e$ with $x_e=1$, and remove $e$ from $E$.
\item
Relax some cut constraints $x(\de(S))  \ge k$ to $x(\de(S)) \ge k-q$ for some integer $q$.  
\end{enumerate}

The relaxation of the cut constraints will be implemented in two different ways.
\begin{enumerate}[(i)]
\item
Relaxing the cut constraint of a single cut $S$ (and also of $\bar{S}=V \sem S$)
and then contracting $S$ into a single node $v_S$;
this also removes the constraint of every cut $A$ such that both $A \cap S$ and $A \cap \bar{S}$ are non-empty.  
We will store the nodes that correspond to such relaxed contracted cuts $S$ in a set $U$.
\item
Relaxing the constraints of all cuts $S$ that separate two nodes $u_1,u_2 \in U$ 
(that correspond to contracted sets $C_1,C_2$) to $x(\de(S)) \ge k-2$.
This is achieved by adding to the partial solution a ``ghost edge'' $u_1u_2$ of capacity $2$,
and considering the ``residual'' demands of sets w.r.t the ghost edges.  
We will store the added ghost edges in a set $H$.
\end{enumerate}

Instead of working with the cut constraints $x(\de_E(S)) \ge k$ it would be convenient to consider 
the residual constraints $x(\de_E(S)) \ge f(S)$ for an appropriately defined set function~$f$.
To define this $f$, suppose that we are already given:
\begin{itemize}
\item
A partial solution $I$ of already chosen ``integral'' edges, that were removed from $E$. 
\item
A set $H$ of ``ghost-edges'' -- these are new virtual edges that are added to the solution.
\item
A set $U$ of nodes that correspond to contracted cuts whose constraints are relaxed. 
\end{itemize}
Let $f(S)$ be a set function defined on proper node subsets $S$ by
\begin{equation} \label{f}
f(S) = \left \{ \begin{array}{ll}
k-d_I(S)-2d_H(S)-2 \ \ & \mbox{if } S=\{u\} \mbox{ or } \bar{S}=\{u\} \mbox{ for some } u \in U \\
k-d_I(S)-2d_H(S)        & \mbox{otherwise}
\end{array} \right .
\end{equation}
This function $f(S)$ is a relaxation of the usual residual function $k-d_I(S)$ of {\kECSS}.
Consider the following LP-relaxation with the function $f$ above:
\begin{equation} \label{LP}
\displaystyle
\begin{array} {lllllll} 
& \min             & c^{T} \cdot x              &                                                  \\ 
& \ \mbox{s.t.} & x(\de_E(S)) \ge f(S) & \forall \empt \ne S \subset V  \\
&                      & 0 \le x_e \le 1           & \forall e \in E    
\end{array}
\end{equation}

A set $S$ is {\bf $f$-positive} if $f(S)>0$. 
Given an LP solution $x$ we say that $S$ is {\bf $x$-tight} if 
the LP-inequality of $S$ holds with equality, namely, if  $x(\de(S))=f(S)$. 
An {\bf $x$-core}, or simply a {\bf core} if $x$ is clear from the context,
is an inclusion-minimal $f$-positive $x$-tight set 
(note that an $x$-core must be $f$-positive and not only $x$-tight).

To {\bf contract a node subset $C$} of a graph $G=(V,E)$ means to identify all nodes in $C$ into one node $v_C$;
edges that have an end in $C$ now have $v_C$ as their end and 
the arising loops (if any) are deleted. 
During the algorithm, we iteratively contract certain node subsets of $G$, 
so we denote the initial graph by $G_0=(V_0,E_0)$.

Our algorithm is given in Algorithm~\ref{alg:main}. 
The main difference between Algorithm~\ref{alg:main} and that of HKZ is that we never add parallel ghost edges,
namely we must have $d_H(u,v)=0$ whenever we add a ghost edge $uv$. 
This has a chain reaction of simplifying some other parts of the algorithm. 
For example, we add a ghost edge when $d_I(u,v) \ge \lceil (k-3)/2 \rceil$ (and $d_H(u,v)=0$)
while the analogous condition in HKZ is $d_{I \cup H}(uv) \in \left[k/2-2,k/2 \right)$. 
The upper bound $d_{I \cup H}(u,v) < k/2$ in HKZ is needed to control the number of parallel ghost edges added, 
which also complicates the analysis.

In Section~\ref{s:main} we will prove the following lemma about extreme point solutions of LP (\ref{LP}).

\medskip 

\begin{algorithm}[H]
\caption{A bicriteria $(1,k-4)$-approximation} \label{alg:main}
{\bf initialization}: $I \gets \empt$, $H \gets \empt$, $U \gets \empt$, $\mu \gets \lceil (k-3)/2\rceil$ \\
\While{$E \ne \empt$}
{
compute an extreme point solution $x$ to LP  (\ref{LP}) with $f$ in (\ref{f}) \\
remove from $E$ every edge $e$ with $x_e=0$ \\
\If{\em there is $e \in E$ with $x_e =1$}
{move from $E$ to $I$ every edge $e$ with $x_e=1$ and goto line 2} 
\If{\em there is an $x$-core $C$ with $d_E(C) \in \{2,3\}$}
{contract $C$ into a single node $v_C$ and update $U \gets (U \sem C) \cup \{v_C\}$} 
\ElseIf{\em there are $u,v \in U$ with $d_I(u,v) \ge \mu$ and $d_H(u,v) =0$}  
{add to $H$ a ghost-edge $uv$ and  update $U \gets U \sem \{u,v\}$}
}
\Return{$J=I$}
\end{algorithm}

\medskip 

\begin{lemma} \label{l:main}
Let $f$ be defined by (\ref{f}) and suppose that
$d_H(u) \le 1$ and $d_I(u)+2d_H(u) \ge k-2$ for all $u \in U$.
Let $x$ be an extreme point of the polytope 
$P=\{x \in [0,1]^E:x(\de_E(S)) \ge f(S)\}$ % of the LP in (\ref{LP})
such that $0<x_e<1$ for all $e \in E$, where $E \ne \empt$. 
Then at least one of the following holds.
\begin{enumerate}[(i)]
\item
There is an $x$-core $C$ with $d_E(C) \in \{2,3\}$; moreover, no edge in $E$ has both ends in $C$.
\item
There are $u,v \in U$ with $d_H(u,v)=0$ and $d_I(u,v) \ge \mu$, where $\mu=\lceil (k-3)/2 \rceil$.
\end{enumerate}
Furthermore, such $u,v$ or such $C$ can be found in polynomial time.
\end{lemma}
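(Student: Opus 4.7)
The plan is the standard extreme-point / uncrossing approach combined with a token-counting argument tuned to the hypotheses on $U$. Since $x$ is an extreme point of $P$ and $0<x_e<1$ for all $e\in E$, no box constraint is tight, so $x$ is pinned down by exactly $|E|$ linearly independent tight cut constraints. Standard uncrossing on $f$-positive $x$-tight sets (the underlying function $k-d_I(S)-2d_H(S)$ is symmetric and supermodular on properly intersecting sets, and the $-2$ correction for $U$-singletons survives uncrossing because singletons do not properly intersect any set) yields a laminar family $\LL$ of $f$-positive $x$-tight sets whose cut-vectors form a basis of the tight constraints, with $|\LL|=|E|$; the inclusion-minimal members of $\LL$ are $x$-cores. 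Crucially, the hypothesis $d_I(u)+2d_H(u)\ge k-2$ for $u\in U$ gives $f(\{u\})=k-d_I(u)-2d_H(u)-2\le 0$, so singletons of $U$ are not $f$-positive and hence do not appear in $\LL$; whenever $u\in U$ lies in some set of $\LL$, the minimal enclosing $\LL$-set of $u$ must contain another vertex as well.

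Assuming both (i) and (ii) fail, I derive a contradiction via Jain-style token counting: each edge donates $2$ tokens, one at each endpoint, and each token is routed to the smallest $\LL$-set containing that endpoint. Since every $f$-positive tight set has $f\ge 1$ and every edge has $x_e<1$, each core $C$ satisfies $d_E(C)\ge 2$; the failure of (i) forces $d_E(C)\ge 4$ for any core $C$ containing no internal edge, so such a core carries a token surplus of at least $2$. A standard charging on the laminar tree (assign $2$ tokens per $\LL$-set, propagate surplus from cores upwards) shows that the only way to balance the total $2|E|=2|\LL|$ is if the surplus is absorbed by internal $\LL$-sets whose private part contains two vertices of $U$; combined with the bounds $d_I(u)\ge k-2-2d_H(u)\ge k-4$ and $d_H(u)\le 1$ this forces two such $u,v\in U$ to satisfy $d_I(u,v)\ge\mu=\lceil(k-3)/2\rceil$ and $d_H(u,v)=0$, which is exactly case (ii). For the ``moreover'' clause I pick a core $C$ with $d_E(C)\in\{2,3\}$ of minimum cardinality; if $|C|\ge 2$ and some edge $uv\in E$ lies strictly inside $C$, then uncrossing $C$ with $\{u\}$ (or $\{v\}$), together with minimality of $C$ as an $f$-positive tight set, produces a strictly smaller $f$-positive tight set whose boundary degree is still in $\{2,3\}$, contradicting the choice of $C$.

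Polynomial-time computability is immediate from standard min-cut and LP machinery: extreme points, uncrossed tight sets, cores, and pairs of nodes of $U$ are all enumerable in polynomial time. The main obstacle I expect is the intricate token bookkeeping in the second paragraph, namely quantifying the deficit introduced by internal $\LL$-sets with empty private part, routing it against the core surplus, and extracting the precise threshold $d_I(u,v)\ge\mu$ together with $d_H(u,v)=0$. The no-parallel-ghost-edges invariant $d_H(u)\le 1$ is essential here on two counts: it produces the lower bound $d_I(u)\ge k-4$ needed to reach $\mu$, and it guarantees that the candidate pair has $d_H(u,v)=0$ so that (ii) is indeed attainable.
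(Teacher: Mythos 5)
Your proposal contains a genuine gap at its foundation: you assert that "standard uncrossing on $f$-positive $x$-tight sets ... yields a laminar family $\LL$ ... whose cut-vectors form a basis," justifying this by the remark that "the $-2$ correction for $U$-singletons survives uncrossing because singletons do not properly intersect any set." This justification is incorrect, and the conclusion is exactly the thing the paper proves can fail. The problematic situation is not a $U$-singleton being uncrossed against another set — it is two $f$-positive $x$-tight sets $A,B$ that cross with $A\cap B=\{u\}$ for some $u\in U$ (or one of the other three corner regions being a $U$-singleton). In that case $g(S)=k-d_I(S)-2d_H(S)$ is supermodular, but $f(A\cap B)=g(A\cap B)-2$, so the inequality $f(A)+f(B)\le f(A\cap B)+f(A\cup B)$ need not hold, and the usual uncrossing step can genuinely get stuck. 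This is precisely the dichotomy driving Lemma~\ref{l:main}: if a laminar $x$-defining family $\LL$ \emph{does} exist, then Jain-style token counting (Lemma~\ref{l:L}) always produces a small core and case~(i) holds; if no such $\LL$ exists, the paper instead exhibits a crossing pair $A,B$ with $U$-singleton corners (Lemma~\ref{l:AB}) and derives case~(ii) directly from those corners (Lemmas~\ref{l:mu} and~\ref{l:w}), with no token counting at all.

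Because of this, your second paragraph is chasing a nonexistent scenario. Once you have a laminar $x$-defining family $\LL$ with $|\LL|=|E|$ and $0<x_e<1$, the token argument of Jain unconditionally yields a minimal set with boundary degree at most $3$, so there is no way for "surplus to be absorbed by internal $\LL$-sets"; case~(i) must hold and you never get to invoke case~(ii) from the laminar picture. Conversely, the inequality $d_I(u,v)\ge\mu$ with $d_H(u,v)=0$ in case~(ii) does not fall out of token bookkeeping on a laminar tree; in the paper it is a simple degree-counting identity on the specific crossing pair $A,B$ whose existence witnesses the failure of uncrossing (e.g.\ $k-1\ge d_I(A)+2d_H(A)=[d_I(u)+2d_H(u)]+[d_I(v)+2d_H(v)]-2d_I(u,v)\ge 2(k-2)-2d_I(u,v)$). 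Finally, your "moreover" argument also has an issue: you propose uncrossing $C$ with $\{u\}$, but $\{u\}$ is in general neither $x$-tight nor $f$-positive, so there is no uncrossing step to perform. The paper instead gets "no internal edge" for free from the fact that every $e\in E$ lies on the boundary of some $S_e\in\LL$, so a minimal set of $\LL$ cannot contain both endpoints of an edge.
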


Interestingly, the two cases in Lemma~\ref{l:main} depend on 
whether there exist a laminar set family $\LL$ such that $x$ is the unique solution
to the equation system $\{x(\de_E(S)=f(S):S \in \LL\}$. %  see precise definitions in Section~\ref{s:main}.
Specifically, if such $\LL$ exists then 
case (i) in Lemma~\ref{l:main} holds, otherwise case~(ii) must hold.

In the next Section~\ref{s:cut} we will show that the algorithm terminates after a polynomial number of iterations
and prove the connectivity guarantee $k-4$. 
Since in each iteration the cut constraints are only relaxed 
(by contractions, or by including nodes in $U$, or by adding ghost edges),
the cost of the produced solution is at most the initial LP-value.
In this context note that the $-2$ relaxation of the nodes $u,v \in U$ is not canceled when $u,v$ are removed 
from $U$, % at line~7, 
it is just replaced  by the $-2$ relaxation caused by the added ghost edge $uv$.

%%%%%%%%%%%%%%%%%%%%%%%%%
\section{Connectivity guarantee} \label{s:cut}
%%%%%%%%%%%%%%%%%%%%%%%%%

For every ghost edge $h=uv$ added, let $I(h)=I(uv)$ be the 
set of (at least $\mu=\lceil (k-3)/2 \rceil$) edges in $J$ that appear in $\de_I(u,v)$. 
A key observation is the following.

\begin{lemma} \label{l:h} 
For any distinct ghost edges $h,h'$ added during the algorithm, $I(h) \cap I(h')=\empt$.
\end{lemma}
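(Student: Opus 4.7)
The plan is to argue by contradiction, tracking the fate of a hypothetical common edge $e \in I(h) \cap I(h')$ through the sequence of contractions between the moments when $h$ and $h'$ are added. Assume without loss of generality that $h=uv$ is added at iteration $t$ and $h'=u'v'$ at a later iteration $t' > t$. The main observation I would build on is that the only operation changing the endpoints of existing edges is the contraction step, and that once a ghost edge enters $H$ it persists there for the rest of the algorithm (its endpoints simply tracking the contractions applied), being deleted only if it becomes a loop.

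First I would set up the two ``snapshots'' of $e$: at time $t$ its endpoints in the current contracted graph are $u,v$ (since $e \in \delta_I(u,v)$), and at time $t'$ they are $u',v'$ (since $e \in \delta_I(u',v')$). The transition between these pairs is entirely determined by the contractions performed between $t$ and $t'$: the current vertex at time $t'$ that absorbed $u$ must lie in $\{u',v'\}$, and likewise for $v$. Call these $\tilde u$ and $\tilde v$. Because $e$ appears in $\delta_I(u',v')$, it is not a loop at time $t'$, so $u' \neq v'$; hence $\tilde u \neq \tilde v$ and $\{\tilde u,\tilde v\}=\{u',v'\}$.

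Next I would apply exactly the same tracking to the ghost edge $h=uv$, whose endpoints at time $t$ are also $u$ and $v$. The contractions between $t$ and $t'$ move its endpoints to $\tilde u$ and $\tilde v$; since these are distinct, $h$ has not been removed as a loop and is still present in $H$ as a ghost edge between $u'$ and $v'$. Therefore $d_H(u',v') \ge 1$ just before the algorithm attempts to add $h'$, contradicting the guard $d_H(u,v)=0$ in the branch of the loop that adds ghost edges.

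The only subtlety I expect to need care with is making explicit that $e$ and $h$ are subjected to the same sequence of contractions, so that the endpoint pattern of $h$ at time $t'$ is forced by that of $e$; the inequality $u' \neq v'$ then does double duty, certifying both that $e$ survives as a non-loop and that $h$ survives rather than being deleted as a self-loop. No subtler invariant about $U$ or about the content of $I$ is needed beyond these endpoint-tracking facts.
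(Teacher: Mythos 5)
Your argument is correct and is essentially the paper's own argument, only unpacked. The paper fixes the \emph{later} ghost edge as $h=uv$, notes that every edge of $I(h)$ is a $uv$-edge while no edge of $I(h')$ can be a $uv$-edge because the guard $d_H(u,v)=0$ forces the earlier ghost edge $h'$ (which tracks through the same contractions as its $I(h')$ edges) not to lie between $u$ and $v$; you run the identical endpoint-tracking reasoning with the WLOG reversed, making explicit that the common edge $e$ and the earlier ghost edge are carried to the same pair of contracted vertices, so the later guard fails.
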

\begin{proof}
This follows from the condition that when a ghost edge $h = uv$ added, 
% at lines 9,10 of the algorithm, 
there is no other ghost edge between $u$ and $v$.
Specifically, assume w.l.o.g. that $h=uv$ was added after $h'$. 
Then all edges in $I(h)$ are $uv$-edges -- go between the sets $C_u,C_v$ that correspond to $u,v$,
while no edge in $I(h')$ is a $uv$-edge. 
\end{proof}

\begin{lemma}\label{l:C}
When a core $C$ % at step $7$ of the algorithm 
is found in the algorithm but not yet contracted, 
$d_H(C) \le 1$, $x(\de_E(C)) \in \{1,2\}$, and $d_I(C) = k-2d_H(C)-x(\de_E(C))$. Consequently,
\begin{itemize}
\item
$d_I(C) \in \{k-1,k-2\}$ if $d_H(C)=0$. 
\item
$d_I(C) \in \{k-3,k-4\}$ if $d_H(C)=1$.
\end{itemize}
\end{lemma}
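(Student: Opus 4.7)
The plan is to establish the three ingredients stated before ``Consequently''---namely $d_H(C) \le 1$, $x(\de_E(C)) \in \{1,2\}$, and the identity $d_I(C) = k - 2d_H(C) - x(\de_E(C))$---and then read off the two bullets. Since $C$ is an $x$-core we have $f(C) > 0$ and $x(\de_E(C)) = f(C)$. First I would rule out the case in which $C = \{u\}$ or $\bar C = \{u\}$ for some $u \in U$: by the hypothesis $d_I(u) + 2d_H(u) \ge k-2$ of Lemma~\ref{l:main} (an invariant maintained by the algorithm) definition~(\ref{f}) forces $f(C) \le 0$, contradicting $f(C) > 0$. Hence $f(C) = k - d_I(C) - 2d_H(C)$, which rearranges to the claimed identity. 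Moreover, $f(C)$ is a positive integer and every remaining $e \in E$ satisfies $0 < x_e < 1$ with $d_E(C) \in \{2,3\}$ (the algorithm's guard at line 7), so $1 \le x(\de_E(C)) < 3$, and thus $x(\de_E(C)) \in \{1,2\}$.

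The main hurdle is proving $d_H(C) \le 1$. I would argue by contradiction: assume $d_H(C) \ge 2$ and pick two distinct ghost edges $h_1,h_2 \in \de_H(C)$. Each $h_j = u_jv_j$ was added at a moment when $d_I(u_j,v_j) \ge \mu = \lceil (k-3)/2 \rceil$, furnishing a witness set $I(h_j) \subs I$ with $|I(h_j)| \ge \mu$. The key topological observation is that contractions only enlarge the $V_0$-subsets represented by nodes of the current graph, so the $V_0$-subsets represented by $u_j$ and $v_j$ at the moment of $h_j$'s addition each lie entirely in one side of the present cut $C$ (on opposite sides, because $h_j$ crosses $C$ now). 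Consequently every edge of $I(h_j)$ still crosses $C$, i.e., $I(h_j) \subs \de_I(C)$. By Lemma~\ref{l:h}, $I(h_1) \cap I(h_2) = \empt$, so
\[
d_I(C) \ge |I(h_1)| + |I(h_2)| \ge 2\mu \ge k-3.
\]
Combined with $2d_H(C) \ge 4$ this yields $f(C) = k - d_I(C) - 2d_H(C) \le -1 < 0$, contradicting $f(C) > 0$.

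Finally, substituting $d_H(C) \in \{0,1\}$ and $x(\de_E(C)) \in \{1,2\}$ into $d_I(C) = k - 2d_H(C) - x(\de_E(C))$ gives the two displayed cases $d_I(C) \in \{k-1,k-2\}$ and $d_I(C) \in \{k-3,k-4\}$, respectively. The only nontrivial step is the topological argument in the contradiction above; everything else is routine bookkeeping with definition~(\ref{f}) and the algorithm's guards.
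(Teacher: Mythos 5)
Your proof is correct and follows essentially the same approach as the paper: show $d_H(C)\le 1$ by contradiction using $d_I(C)\ge d_H(C)\cdot\mu$ (via disjointness of the witness sets from Lemma~\ref{l:h}) and $f$-positivity, then read off $x(\de_E(C))\in\{1,2\}$ from $d_E(C)\le 3$, integrality of $f(C)$, and $0<x_e<1$. You make explicit two points the paper leaves implicit — ruling out $C$ or $\bar C$ being a singleton in $U$ so that $f(C)=k-d_I(C)-2d_H(C)$, and the contraction argument giving $I(h_j)\subs\de_I(C)$ — which is a welcome clarification rather than a different route.
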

\begin{proof}
If $d_H(C) \ge 2$ then $d_I(C)+2d_H(C) \ge 2\mu +4 \ge k$, hence $f(C) \le 0$, 
contradicting that $C$ is a core (recall that a core must be an $f$-positive set).  
We have $d_E(C) \in \{2,3\}$, so there are $2$ or $3$ fractional edges in $\de_E(C)$. 
Since $C$ is $x$-tight, $x(\de_E(C))$ is a positive integer that is either $1$ or $2$,
so there are $k-1-2d_H(C)$ or $k-2-2d_H(C)$ (integral) edges in $\de_I(C)$. 
If $d_H(C)=0$ then $d_I(C) \in \{k-1,k-2\}$ and if $d_H(C)=1$ then $d_I(C) \in \{k-3,k-4\}$;
see Fig.~\ref{f:core}(a,b) where $\de_E(x(C))=2$ and $d_H(C)=0$ or $d_H(C)=1$, respectively. 
\end{proof}

\begin{figure} \centering \includegraphics[scale=0.5]{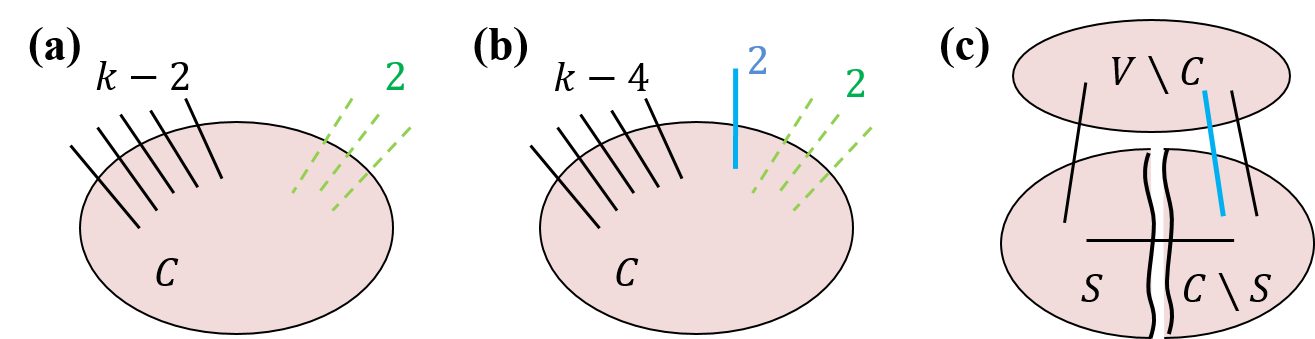}
\caption{Illustration to the proofs of Lemmas \ref{l:C} and \ref{l:cut}.
Edges in $I$ are shown by black lines, 
edges in $E$ (the fractional edges) by green dashed lines, 
and the ghost edge is blue.}
\label{f:core} \end{figure}

The following lemma shows that the conditions of Lemma~\ref{l:main} hold during the algorithm,
and thus the algorithm terminates.

\begin{lemma} \label{l:dH}
During the algorithm $d_H(u) \le 1$ and $d_I(u)+2d_H(u) \ge k-2$ 
holds for all~$u \in U$. % and $f(\{u\}) \le 0$
Furthermore, $d_H(v) \le 2$ for all $v \in V \sem U$.
\end{lemma}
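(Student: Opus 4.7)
My plan is to prove the three assertions jointly by induction on the number of iterations of Algorithm~\ref{alg:main}. At the start of the algorithm $I = H = \empt$ and $U = \empt$, so all three claims hold vacuously. For the inductive step I would analyse separately each operation that a single iteration can perform: deleting an edge with $x_e=0$, moving an integral edge from $E$ to $I$, contracting a core $C$ into $v_C$, and adding a ghost edge $uv$ between two nodes of $U$.

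The first two operations are immediate. Deleting an edge does not affect $I$, $H$ or $U$, and integral rounding only increases $d_I$ at the endpoints of the rounded edge while leaving $d_H$ and $U$ untouched; thus (a) is preserved, (b) can only be strengthened, and the bound on $d_H$ in (c) is unaffected. The contraction step, where $v_C$ joins $U$, is where Lemma~\ref{l:C} does all the work: it gives $d_H(v_C) = d_H(C) \le 1$ and $d_I(v_C) + 2 d_H(v_C) = k - x(\de_E(C)) \ge k-2$, which are precisely the two invariants required of the new element of $U$. The nodes previously in $C$ are absorbed into $v_C$ and leave $V$, so their invariants become vacuous, while for nodes outside $C$ the quantities $d_I, d_H$ are merely relabelled (incidences to nodes of $C$ become incidences to $v_C$), so (c) is preserved there.

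For the ghost-edge step I would observe that $u$ and $v$ are removed from $U$, so (a) and (b) cease to apply to them; by the inductive hypothesis we had $d_H(u), d_H(v) \le 1$ before the update, and each quantity rises by one, leaving $d_H(u), d_H(v) \le 2$, which is exactly the bound (c) demands of their new status in $V \sem U$. No other node's counts change. The delicate point in the whole argument is the contraction case, whose verification hinges entirely on Lemma~\ref{l:C}; that lemma in turn leans on Lemma~\ref{l:h} to force $d_I(C) \ge 2\mu$ whenever two or more ghost edges cross a core $C$, which is what rules out $d_H(C) \ge 2$ for a core and so underpins the whole inductive step.
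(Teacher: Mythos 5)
Your proof is correct and takes essentially the same approach as the paper: both arguments hinge on Lemma~\ref{l:C} (which in turn rests on Lemma~\ref{l:h}) to show that when a core $C$ is contracted the new node $v_C$ satisfies $d_H(v_C)\le 1$ and $d_I(v_C)+2d_H(v_C)\ge k-2$, and both observe that a ghost edge raises $d_H$ by exactly one at each endpoint while removing those endpoints from $U$. Your explicit iteration-by-iteration induction simply makes overt the inductive structure that the paper's more compact proof (which directly argues that $f(\{u\})\le 0$ is preserved so $\{u\}$ never re-enters $U$ as a core) leaves implicit.
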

\begin{proof}
By Lemma~\ref{l:C}, when $u$ enters $U$ we have 
$d_H(u) \le 1$ and  
$d_I(u) \ge k-2d_H(u)-2$.
This implies 
$f(\{u\})=k-d_I(u)-2d_H(u)-2 \le 0$.
The addition of a ghost edge incident to $u$ excludes $u$ from $U$ but does not increase $f(\{u\})$. 
Hence $\{u\}$ will never be chosen as a core $C$ and will never enter $U$ again. 
If we had $d_H(u)=1$ when $u$ entered $U$, then $u$ will leave $U$ with $d_H(u)=2$ 
and will never enter $U$ again. Thus $d_H(v) \le 2$ for all $v \in V \sem U$.
\end{proof}

A set family $\RR$ is {\bf laminar} if any two sets in the family are either disjoint or one contains the other. 
Let $\RR$ be the family of subsets of $V_0$ that correspond to the contracted sets during the algorithm
(recall that $G_0=(V_0,E_0)$ denotes the initial graph).
It is known and not hard to see that $\RR$ is laminar.
We will say that a cut $S$ of $G_0$ is {\bf compatible} with a graph $G$ during the algorithm
if $R \subs S$ or $R \cap S=\empt$ for every $R \in \RR$ that was contracted so far to obtain $G$. 
As long as $S$ is compatible with $G$, we can look at the cut of $G$ that corresponds to $S$,
which for simplicity of notation we will also denote by $S$, and lower bound 
$d_I(S)$ (the degree of the set that corresponds to $S$ w.r.t to the partial solution $I$) instead of 
$d_J(S)$ (the degree of $S$ w.r.t to the final solution $J$),
since $I \subs J$ and hence $d_J(S) \ge d_I(S)$.

We would like to establish the bound $d_J(S) \ge k-4$ for a cut $S$ of $G_0$ such that $\RR \cup \{S\}$ is laminar.
For such $S$, it is easy to determine the last graph $G$ that $S$ is compatible with.
For a cut $S$ of $G_0$ let $R_S$ be the inclusion minimal set in $\RR \cup \{V\}$ that properly contains $S$.
If $R_S=V$ then $S$ compatible with $G$ till the end of the algorithm.
If $R_S \ne V$ then $S$ (such that $\RR \cup \{S\}$ is laminar) 
is compatible with $G$ as long as $R_S$ is not contracted,
hence we will analyze such $S$ as a cut of $G$ right before the core $C=R_S$ is contracted.
Due to the fact that at this moment no edge in $E$ has both ends in $C$ 
and since $d_E(C) \le 3$ and $x(\de_E(C)) \le 2$ (by Lemma~\ref{l:main}(i)),
the difference between $d_I(S)$ and $d_J(S)$ is small. 

\begin{lemma} \label{l:cut}
When a core $C$ % at step $7$ of the algorithm 
in the algorithm is found but not yet contracted, 
the following holds for any proper subset $S$ of $C$.
\begin{enumerate}[(i)]
\item
$d_I(S) \ge d_H(S) \cdot \mu$ and $d_I(S) \ge k-2d_H(S)-2$.
\item
$d_I(S,C \sem S) \ge \mu$.
\end{enumerate}
\end{lemma}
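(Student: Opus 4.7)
Overall, I derive each inequality from the structural properties of the core $C$ established in Lemmas \ref{l:C} and \ref{l:main}, together with the LP constraint and Lemma \ref{l:h}. For the first inequality of (i), I would track every ghost edge $h \in H$ with one endpoint in $S$ and one outside. Recall that $I(h) \subseteq J$ is a set of at least $\mu$ real edges that were between the two endpoints of $h$ at the time $h$ was added; by laminarity of the contractions, the original subsets of $V_0$ corresponding to those endpoints remain on opposite sides of $S$, so every edge of $I(h)$ lies in $\delta_I(S)$. Disjointness of these sets across distinct ghost edges (Lemma \ref{l:h}) then yields $d_I(S) \ge d_H(S) \cdot \mu$.

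For the second inequality of (i), the key input is the ``no-$E$-edge-inside-$C$'' clause of Lemma \ref{l:main}(i). Combined with $S \subsetneq C$, this forces $\delta_E(S) \subseteq \delta_E(C)$, hence $x(\delta_E(S)) \le x(\delta_E(C)) \le 2$ by Lemma \ref{l:C}. The LP constraint $x(\delta_E(S)) \ge f(S)$ then gives $f(S) \le 2$, which, unwinding definition~(\ref{f}), yields $d_I(S) + 2 d_H(S) \ge k - 2$ whenever $S$ is not a singleton in $U$. For the singleton case $S = \{u\}$ with $u \in U$ the bound follows directly from Lemma \ref{l:dH}.

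For (ii), I would split on whether a ghost edge crosses the pair $(S, C \setminus S)$. If $d_H(S, C \setminus S) \ge 1$, repeating the argument of the first inequality of (i) for this pair shows that every such ghost edge contributes at least $\mu$ real edges to $\delta_I(S, C \setminus S)$, immediately yielding $d_I(S, C \setminus S) \ge \mu$. Otherwise $d_H(S, C \setminus S) = 0$, so $d_H(S) + d_H(C \setminus S) = d_H(C)$, and combining (i) applied to both $S$ and $C \setminus S$ with the identity $d_I(S) + d_I(C \setminus S) = 2 d_I(S, C \setminus S) + d_I(C)$ and the equality $d_I(C) = k - 2 d_H(C) - x(\delta_E(C))$ from Lemma \ref{l:C} (with $x(\delta_E(C)) \ge 1$) collapses to $2 d_I(S, C \setminus S) \ge k - 3$, hence $d_I(S, C \setminus S) \ge \lceil (k-3)/2 \rceil = \mu$. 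I expect the main obstacle to be exactly this case split: the symmetric application of (i) alone would lose a factor of $2 d_H(S, C \setminus S)$, and the ghost-edge argument from the first inequality of (i) is precisely what absorbs this loss in the complementary case.
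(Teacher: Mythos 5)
Your argument matches the paper's proof step for step: the same use of Lemma~\ref{l:h} for $d_I(S) \ge d_H(S)\mu$, the same reliance on $\de_E(S) \subs \de_E(C)$ and $x(\de_E(C)) \le 2$ to get $d_I(S) \ge k - 2d_H(S) - 2$ (you cite Lemma~\ref{l:dH} for the singleton case where the paper cites Lemma~\ref{l:C}, but both justify the same bound), and the identical case split on $d_H(S, C \sem S)$ together with the counting identity $d_I(S) + d_I(C\sem S) = 2d_I(S, C\sem S) + d_I(C)$ for part (ii). This is essentially the paper's proof.
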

\begin{proof}
We prove (i). $d_I(S) \ge d_H(S) \cdot \mu$ by Lemma~\ref{l:h}.
If $S=\{u\}$ for some $u \in U$ then $d_I(u) \ge k-2d_H(u)-2$ by Lemma~\ref{l:C}.
Otherwise, the constraint of $S$ is relaxed by ghost edges only, 
and since $\de_E(S) \subs \de_E(C)$ (Lemma~\ref{l:main}(i)) 
and $x(\de_E(C)) \le 2$ (Lemma~\ref{l:C}), we get 
\[
d_I(S) \ge k-2d_H(S) -x(\de_E(S)) \ge k-2d_H(S)-2 \ .
\]

We prove (ii). 
If $d_H(S,C \sem S) \ge 1$ then $d_I(S,C \sem S) \ge \mu$ 
since $d_I(u,v) \ge \mu$ when the ghost edge $uv$ is added. 
Otherwise, $d_H(S)+d_H(C \sem S)=d_H(C) \le 1$ 
and since $d_J(R) \le k-2d_H(C)-1$ (by Lemma~\ref{l:C}) 
and by part (i) we get 
\begin{eqnarray*}
2d_I(S,C \sem S) &   =   & d_I(S)+d_I(C \sem S)-d_I(C) \\
                              & \ge & [k-2d_H(S)-2] + [k-2d_H(C \sem S)-2]-[k-2d_H(C)-1] \\
															&   =  & k-3-2[d_H(S)+d_H(C \sem S)-d_H(C)] = k-3 \ ,
\end{eqnarray*}
concluding the proof.
\end{proof}

The notation $\de_H(S)$ and $d_H(S)$ are well defined for any node subset $S$ at any iteration of the algorithm. 
We now extend it to any subset $S$ of $V$. % such that $\RR \cup \{S\}$ is laminar. 
We can view a ghost edge $v_1v_2$ as a pair $\{C_1,C_2\}$ of sets in $\RR$
that correspond to $v_1,v_2$. 
We say that such a {\bf ghost-edge covers $S$} if one of $C_1,C_2$ is contained in $S$ and the other in $\bar{S}$.
Let $\de_H(S)$ denote the set of ghost edges in $H$ that cover $S$
and let $d_H(S)$ be their number.

\begin{corollary} \label{c:bound}
Let $S$ be a cut such that $\RR \cup \{S\}$ is laminar.
Then $d_J(S) \ge k-2d_H(S)-2$ and $d_J(S) \ge d_H(S) \cdot \mu$. 
Furthermore, if $R_S=V$ and $S \notin R$ then $d_J(S) \ge k-2d_H(S)$.
Consequently, $d_J(S) \ge \max\{k-2d_H(S)-2, d_H(S) \cdot \mu\} \ge k-4$.
\end{corollary}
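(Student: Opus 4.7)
My plan is to partition the analysis by where $S$ sits relative to the laminar family $\RR$ of contracted sets, into three regimes: (i) $R_S \ne V$, so $S$ lies strictly inside some contracted core $C = R_S$; (ii) $R_S = V$ with $S \in \RR$, so $S$ was itself contracted and is a top-level element of $\RR$; and (iii) $R_S = V$ with $S \notin \RR$, so $S$ is never contracted and remains compatible with the working graph through termination. In each regime I will establish the two main bounds $d_J(S) \ge k - 2d_H(S) - 2$ and $d_J(S) \ge d_H(S) \cdot \mu$; the sharpened ``furthermore'' bound will arise only in regime (iii), from the LP constraint at termination.

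For regime (i), I invoke Lemma~\ref{l:cut}(i) at the last iteration in which $S$ is compatible with the working graph, namely just before $C = R_S$ is contracted; the lemma immediately yields $d_I(S) \ge k - 2 d_H(S) - 2$ and $d_I(S) \ge d_H(S) \cdot \mu$ at that moment. The key step, which I expect to be the main obstacle, is to show that $d_H(S)$ is frozen from that moment onward: any ghost edge added later has either an endpoint equal to $v_C$ (whose underlying set $C \supsetneq S$ straddles $S$, hence the ghost edge does not cover $S$) or both endpoints outside $C$ (so both underlying sets lie in $\bar C \subseteq \bar S$). This relies on the laminarity of $\RR$, which forces every later-contracted core to be either disjoint from $C$ or a proper superset of $C$, and in the latter case its underlying set also straddles $S$. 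Since $d_I(S)$ is monotone non-decreasing, both bounds transfer to $d_J(S)$.

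For regime (ii), $S$ enters $\RR$ at the moment it is contracted as a core $C = S$. Lemma~\ref{l:C} then gives $d_I(S) + 2 d_H(S) \ge k - 2$ at that moment, and because both $d_I(S)$ and $d_H(S)$ are monotone non-decreasing thereafter, the inequality persists to termination, giving $d_J(S) \ge k - 2 d_H(S) - 2$. For regime (iii), $S$ is compatible with the working graph at every iteration, so at termination (where $E = \empt$) the LP constraint $x(\de_E(S)) \ge f(S)$ forces $f(S) \le 0$; inspecting~(\ref{f}) yields either the ``furthermore'' bound $d_J(S) \ge k - 2 d_H(S)$ (when the $-2$ correction does not apply to $S$) or the weaker $d_J(S) \ge k - 2 d_H(S) - 2$ (when it does). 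The uniform second bound $d_J(S) \ge d_H(S) \cdot \mu$ across all three regimes follows from Lemma~\ref{l:h}: the sets $I(h)$, as $h$ ranges over the ghost edges covering $S$ at termination, are pairwise disjoint subsets of $\de_J(S)$ each of size at least $\mu$.

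The $k - 4$ consequence then follows from a short split on $d_H(S)$. If $d_H(S) \le 1$, the first bound gives $d_J(S) \ge k - 2 d_H(S) - 2 \ge k - 4$. If $d_H(S) \ge 2$, the second bound gives $d_J(S) \ge d_H(S) \cdot \mu \ge 2\mu = 2 \lceil (k-3)/2 \rceil \ge k - 3 \ge k - 4$, completing the argument.
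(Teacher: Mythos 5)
Your proof is correct and takes essentially the same route as the paper: partition by the position of $S$ relative to the contracted family $\RR$, invoke Lemma~\ref{l:cut}(i) when $S$ lies strictly inside a contracted set, Lemma~\ref{l:C} when $S$ itself is in $\RR$, and the LP feasibility at termination (forcing $f(S)\le 0$ once $E=\empt$) in the remaining case, with the uniform bound $d_J(S)\ge d_H(S)\cdot\mu$ supplied by Lemma~\ref{l:h}. The only unnecessary detour is the ``freezing'' argument in regime~(i): once Lemma~\ref{l:cut}(i) yields $d_I(S)+2d_H(S)\ge k-2$ just before $R_S$ is contracted, the inequality persists automatically to termination because both $d_I(S)$ and $d_H(S)$ are monotone and each later ghost edge across $S$ raises the left-hand side by $2$, so you need not show $d_H(S)$ stays fixed.
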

\begin{proof}
The bound $d_J(S) \ge d_H(S) \cdot \mu$ follows from Lemma \ref{l:h}, 
so we prove only the bound $d_J(S) \ge k-2d_H(S)-2$.
For $S \in \RR$ it follows from Lemma \ref{l:C}, so assume that $S \notin \RR$. 
If $S$ is contained in some set in $\RR$, then let $C$ be the inclusion minimal set in $\RR$ that contains $S$.
The bound then follows from Lemma~\ref{l:cut}(i).
Otherwise, the constraint of $S$ was relaxed by ghost edges only and then $d_J(S) \ge k-2d_H(S)$.
\end{proof}

Before establishing the bound $d_J(S) \ge k-4$ for sets $S$ such that 
$\RR \cup \{S\}$ is not laminar, we need three additional lemmas.
The next lemma considers sets $R \in \RR$ and $T \subset \bar{R}$ 
such that both $\RR \cup T,\RR \cup (\bar{R} \sem T)$ are laminar;
one can verify that this is equivalent to $\RR \cup \{T,\bar{R} \sem T\}$ being laminar.
Moreover, this means that $R$ is an inclusion maximal set in $\RR$, and 
any other set in $\RR$, except for maybe $\bar{R}$ (if $\bar{R} \in \RR$), 
is contained in one of $R,T,\bar{R} \sem T$.  

\begin{lemma} \label{l:bR}
Let $R \in \RR$ and let $T$ be a proper subset of $\bar{R}$ such that the set family $\RR \cup \{T,\bar{R} \sem T\}$ is laminar.
Then $d_J(T,\bar{R} \sem T) \ge \lceil (k-6)/2 \rceil$.
\end{lemma}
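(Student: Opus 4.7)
The plan is to apply inclusion–exclusion
$$2\,d_J(T,\bar R\sem T) \;=\; d_J(T) + d_J(\bar R\sem T) - d_J(R),$$
bound the three terms on the right using Lemma~\ref{l:h}, Corollary~\ref{c:bound}, and the structural information about $R$ at the moment of its contraction, and add them up.

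First I would dispose of the case $d_H(T,\bar R\sem T)\ge 1$ using Lemma~\ref{l:h}: any ghost edge between $T$ and $\bar R\sem T$ carries $\ge \mu$ edges of $J$ between these two sets, so $d_J(T,\bar R\sem T)\ge \mu\ge\lceil (k-6)/2\rceil$ and we are done. Assume henceforth $d_H(T,\bar R\sem T)=0$. Since every ghost-edge end in $\de_H(T)\cup\de_H(\bar R\sem T)$ is then attached to $R$, we obtain the clean identity $d_H(T)+d_H(\bar R\sem T)=d_H(R)$. Corollary~\ref{c:bound} supplies $d_J(T)\ge k-2\,d_H(T)-2$ and $d_J(\bar R\sem T)\ge k-2\,d_H(\bar R\sem T)-2$, which sum to $2k-2\,d_H(R)-4$.

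The crux is the matching upper bound $d_J(R)\le k+2-2\,d_H(R)$. Because $R$ is inclusion-maximal in $\RR$ (as observed in the paragraph preceding the statement), $v_R$ persists as a distinct node of the contracted graph throughout the remainder of the algorithm. At the moment of contraction Lemma~\ref{l:C} gives $d_I(R)\le k-2\,d_H(R)_{\text{contr}}-1$ and $d_E(R)\le 3$, and since contractions create no new edges, at most $3$ further edges can enter $I$ at $v_R$, yielding the crude bound $d_J(R)\le k+2-2\,d_H(R)_{\text{contr}}$. By Lemma~\ref{l:dH} the final $d_H(R)$ either equals $d_H(R)_{\text{contr}}$ or exceeds it by exactly $1$; in the former sub-case we are done, and in the latter the single post-contraction ghost edge $u v_R$ bundles $\ge\mu$ edges of $J$ between $R$ and some $C_u\subseteq T$ or $C_u\subseteq\bar R\sem T$ by Lemma~\ref{l:h}, and combined with the $-2$ relaxation this ghost induces on every cut containing $v_R$ but not $u$ it forces two of the fractional edges at $v_R$ to round down to $0$ rather than up to $1$, restoring the missing~$2$.

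Plugging the three bounds into the identity and using $d_H(T)+d_H(\bar R\sem T)=d_H(R)$ yields
$$2\,d_J(T,\bar R\sem T)\;\ge\;\bigl(2k-2\,d_H(R)-4\bigr)-\bigl(k+2-2\,d_H(R)\bigr)\;=\;k-6,$$
so $d_J(T,\bar R\sem T)\ge\lceil (k-6)/2\rceil$ by integrality of $d_J$. The technically delicate step is the second sub-case of the upper bound on $d_J(R)$: the naive contraction-time counting is short by $2$, and closing this gap requires combining the $\mu$ bundled edges from Lemma~\ref{l:h} with the ghost-induced $-2$ relaxation around $v_R$, relying on the maximality of $R$ in $\RR$ to ensure that $v_R$ remains a distinct node so that this bookkeeping is well defined.
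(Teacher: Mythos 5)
Your proof follows the paper's approach exactly: dispose of the case $d_H(T,\bar R\sem T)\ge 1$ via Lemma~\ref{l:h}, and otherwise apply the identity $2d_J(T,\bar R\sem T)=d_J(T)+d_J(\bar R\sem T)-d_J(R)$ together with the lower bounds of Corollary~\ref{c:bound} and an upper bound $d_J(R)\le k+2-2d_H(R)$. You also correctly pinpoint the one delicate ingredient: Lemma~\ref{l:C} controls $d_I(R)$ and $d_H(R)$ only at the moment $R$ is contracted, whereas the quantity $d_H(R)$ in the final accounting may exceed the contraction-time value by one if a ghost edge at $v_R$ is added afterwards, and then the naive ``$d_I$ at contraction plus at most $d_E(C)\le 3$ further edges'' bound is short by $2$.

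However, your attempt to close this gap is not a valid argument. You claim that the post-contraction ghost edge, ``combined with the $-2$ relaxation this ghost induces on every cut containing $v_R$ but not $u$,'' forces two of the fractional edges at $v_R$ to round down to $0$. Relaxing $f(S)$ for cuts $S$ that separate $v_R$ from $u$ makes those particular constraints slacker, but it does nothing to prevent a fractional edge $v_Rw$ from being driven to $1$ by some \emph{other} tight constraint in a later extreme point (e.g.\ a cut around $w$ that the ghost edge does not cover). A constraint becoming slack never forces a variable to $0$; only the remaining tight constraints determine extreme-point values. So after the ghost edge is added, all surviving fractional edges at $v_R$ can still enter $I$, and you obtain no improvement over $d_J(R)\le k+2-2d_H(R)_{\text{at contraction}}$. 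The sub-case where $d_H(R)$ increases after contraction therefore remains open in your write-up. (For what it is worth, the paper states $d_J(R)\le k-2d_H(R)+2$ as following directly from Lemma~\ref{l:C} without spelling out this sub-case; the subtlety you flag is real, but the mechanism you propose to resolve it is not a proof.)
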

\begin{proof}
If $d_H(T,\bar{R} \sem T) \ge 1$ then $d_J(T,R \sem T) \ge \mu>\lceil (k-6)/2 \rceil$ by Lemma~\ref{l:h}.
Otherwise, $d_H(R)=d_H(T)+d_H(\bar{R} \sem T)$. Since 
$d_J(T) \ge k-2d_H(T)-2$ and 
$d_J(\bar{R} \sem T) \ge k-2d_H(\bar{R} \sem T)-2$ by Corollary~\ref{c:bound}, and
$d_J(R) \le k-2d_H(R)+2$ by Lemma~\ref{l:C}, we get (see Fig.~\ref{f:count}(a))
\begin{eqnarray*}
2d_J(T,\bar{R} \sem T) &  =  & d_J(T)+d_J(\bar{R} \sem T)-d_J(R) \\
                                        & \ge & [k-2d_H(T)-2] + [k-2d_H(\bar{R} \sem T)-2]-[k-2d_H(R)+2] \\
															          &   =  & k-6-2[d_H(T)+d_H(\bar{R} \sem T)-d_H(R)] = k-6 \ ,
\end{eqnarray*}
concluding the proof.
\end{proof}

We say that a set $S$ {\bf overlaps} a set $R$, or that $R,S$ {\bf overlap},
if the pair $\{R,S\}$ is not laminar, namely if the sets $R \cap S, R \sem S, S \sem R$ are all non-empty.
Now we consider sets $S$ that overlap some $R \in \RR$.
Note however, that it may be that $S$ overlaps some $R \in \RR$, but $\bar{S}$ does not, 
so we can still deduce from Corollary~\ref{c:bound} that $d_J(S)=d_J(\bar{S}) \ge k-4$.
To avoid this confusion we will examine among the sets $S,\bar{S}$ one that is more ``compatible'' with $\RR$.
More formally, let $\RR(S)=\{R \in \RR: S \mbox{ overlaps } R\}$ be the family of sets in $\RR$ that $S$ overlaps,
and note that so far we examined the case $|\RR(S)|=0$. 
So now w.l.o.g. we will assume that $1 \le |\RR(S)| \le |\RR(\bar{S})|$. 
Under this assumption, we have the following.

\begin{lemma} 
If $1 \le |\RR(S)| \le |\RR(\bar{S})|$ then $V \sem (R \cup S) \ne \empt$ for any $R \in \RR(S)$. 
 \end{lemma}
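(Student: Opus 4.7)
The plan is to argue by contradiction: I would suppose some $R \in \RR(S)$ satisfies $R \cup S = V$ and deduce $|\RR(S)| > |\RR(\bar S)|$, contradicting the hypothesis. Under this assumption $\bar S \subs R$ and $\bar R \subs S$; both containments are strict, because $R$ overlaps $S$ forces $R \cap S \ne \empt$ (else $\bar S = R$) and $S \sem R \ne \empt$ (else $\bar R = S$, and again $R \cap S = \empt$). Moreover $\bar S \sem R = \empt$, so $R$ does \emph{not} overlap $\bar S$; hence $R$ lies in the symmetric-difference set $B := \RR(S) \sem \RR(\bar S)$. Setting $C := \RR(\bar S) \sem \RR(S)$, we have $|\RR(S)|-|\RR(\bar S)| = |B|-|C|$, so it suffices to show $|B| \ge |C|+1$.

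My next step would be a short bookkeeping from the definition of ``overlap'' to obtain the descriptions
$B = \{R' \in \RR : \bar S \subsetneq R' \subsetneq V\}$ and $C = \{R' \in \RR : S \subsetneq R' \subsetneq V\}$.
The key observation is that $R' \cap S \ne \empt$ and $R' \sem S \ne \empt$ are common to ``$R'$ overlaps $S$'' and ``$R'$ overlaps $\bar S$'', so the asymmetry comes only from whether $S \sem R'$ or $\bar S \sem R'$ is empty, which translates respectively into $S \subsetneq R'$ or $\bar S \subsetneq R'$ (together with $R' \ne V$ to ensure the nonempty ``third piece'' on the other side).

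Given these characterizations, I would fix any $R' \in C$, so $S \subsetneq R' \subsetneq V$, and apply laminarity of $\RR$ to the pair $\{R,R'\}$. I would rule out the case $R' \subs R$ (which, together with $S \subsetneq R'$, would force $S \subs R$ and hence $R = R \cup S = V$, contradicting $\bar R \ne \empt$) and the case $R \cap R' = \empt$ (which would force $R' \subs \bar R \subsetneq S$, contradicting $S \subsetneq R'$). This leaves $R \subsetneq R'$, and then $\bar S \subsetneq R \subsetneq R'$ places $R' \in B$. Combined with $R \in B \sem C$, this yields $|B| \ge |C|+1$, the desired contradiction.

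The hard part will be the bookkeeping characterization of $B$ and $C$: translating ``$R'$ overlaps $S$ but not $\bar S$'' into a pure containment statement requires care with the six simultaneous (non-)emptiness conditions. Once it is in hand, the subsequent laminar two-case check that $C \subseteq B$ is short and mechanical, and so is the conclusion.
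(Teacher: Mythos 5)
Your proof is correct, and it takes a genuinely different route than the paper's. The paper reduces to the standard uncrossing fact (Vazirani, Lemma~23.15) that $\RR(R \sem S) \subsetneq \RR(S)$: any set in $\RR$ overlapping $R \sem S$ also overlaps $S$, while $R$ itself lies in $\RR(S) \sem \RR(R \sem S)$; since $R \cup S = V$ forces $R \sem S = \bar S$, this immediately gives $|\RR(\bar S)| < |\RR(S)|$. You instead work directly with the symmetric difference $\RR(S) \triangle \RR(\bar S)$: the characterizations $B = \{R' \in \RR : \bar S \subsetneq R' \subsetneq V\}$ and $C = \{R' \in \RR : S \subsetneq R' \subsetneq V\}$ are exactly right (overlap with $S$ but not $\bar S$ forces $\bar S \sem R' = \empt$, and the remaining nonemptiness conditions translate into strict containments), your laminarity case analysis is sound, and in fact it shows that $C$ must be \emph{empty} — since $B$ and $C$ are disjoint, $C \subseteq B$ forces $C = \empt$ — while $R$ witnesses $|B| \ge 1$. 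That is slightly stronger than the $|B| \ge |C|+1$ bookkeeping you describe, though the conclusion is the same. The paper's proof is shorter because it invokes a textbook lemma; yours is longer but self-contained and gives a cleaner structural picture of the two overlap families. One small phrasing quibble: you call $\{B, C\}$ a ``symmetric-difference'' pair and then reason as if $C$ might nontrivially embed into $B$ — it is worth noting explicitly that $B \cap C = \empt$ (a set cannot properly contain both $S$ and $\bar S$ without being $V$), so your containment argument actually kills $C$ outright.
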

\begin{proof}
Let $R \in \RR(S)$ and suppose that $|\RR(S)| \le |\RR(\bar{S})|$.
It is known that $\RR(R \sem S) \subset \RR(S)$;
this is so since every set in $\RR$ that overlaps $R \sem S$ also overlaps $S$, but $R$ overlaps $S$ but not $R \sem S$,
c.f.  \cite[Lemma~23.15]{V}. Thus $|\RR(R \sem S)| < |\RR(S)|$.
If $V \sem (R \cup S) = \empt$ then $R \sem S=\bar{S}$ and we get $|\RR(\bar{S})| < |\RR(S)|$
contradicting the assumption $|\RR(S)| \le |\RR(\bar{S})|$.
\end{proof}

\begin{lemma} \label{l:Rm}
Suppose that $1 \le |\RR(S)| \le |\RR(\bar{S})|$.
\begin{enumerate}[(i)]
\item
If $R$ is a minimal set in $\RR(S)$ then 
$d_J(R \cap S,R \sem S) \ge \mu$.
\item
If $R$ is a unique maximal set in $\RR(S)$ then 
$d_J(\bar{R} \cap S, \bar{R} \sem S) \ge \lceil (k-6)/2 \rceil$.
\end{enumerate}
\end{lemma}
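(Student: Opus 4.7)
The plan is to handle parts (i) and (ii) separately. For (i), minimality of $R$ in $\RR(S)$ means that no set $R'\in\RR$ strictly contained in $R$ overlaps $S$, so every such $R'$ lies entirely in $R\cap S$ or entirely in $R\sem S$. Consequently, at the iteration when $R$ is identified as a core $C=R$ (but before being contracted), the set $R\cap S$ is a union of original nodes and previously contracted supernodes lying inside $C$, hence a well-defined proper non-empty subset of $C$ in the current graph. Applying Lemma~\ref{l:cut}(ii) with $S'=R\cap S$ yields $d_I(R\cap S,R\sem S)\ge\mu$; integral edges added later are incident to the new node $v_R$ rather than internal to $R$, so this bound passes to $d_J$.

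For (ii), the natural attempt is to invoke Lemma~\ref{l:bR} with $T=\bar R\cap S$, but this requires $\RR\cup\{\bar R\cap S,\bar R\sem S\}$ to be laminar, which can fail: if some $R'\in\RR$ satisfies $R'\supsetneq R\cup S$, then $R'$ overlaps $\bar R\sem S$. The plan is therefore to bypass Lemma~\ref{l:bR} and replicate its arithmetic by applying Corollary~\ref{c:bound} separately to $\bar R\cap S$ and to $R\cup S=\overline{\bar R\sem S}$, both of which do extend $\RR$ to a laminar family. Laminarity of $\RR\cup\{\bar R\cap S\}$ and $\RR\cup\{R\cup S\}$ is verified by case analysis on $R'\in\RR$: if $R'\subseteq R$ the claim is trivial; if $R'\supsetneq R$ then maximality of $R$ in $\RR(S)$ forces $S\subseteq R'$, so $R'\supseteq R\cup S\supseteq\bar R\cap S$; and if $R'\cap R=\empt$ then uniqueness of the maximal element of $\RR(S)$ rules out $R'\in\RR(S)$, giving either $R'\subseteq S$ (so $R'\subseteq\bar R\cap S\subseteq R\cup S$) or $R'\cap S=\empt$ (so $R'$ is disjoint from both $\bar R\cap S$ and $R\cup S$). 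Note that $\bar R\cap S\ne\empt$ since $R$ overlaps $S$, and $V\sem(R\cup S)\ne\empt$ by the lemma preceding the statement, so both are cuts.

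Once laminarity is secured, Corollary~\ref{c:bound} gives $d_J(\bar R\cap S)\ge k-2d_H(\bar R\cap S)-2$ and $d_J(\bar R\sem S)=d_J(R\cup S)\ge k-2d_H(\bar R\sem S)-2$, while Lemma~\ref{l:C} yields $d_J(R)\le k-2d_H(R)+2$. The identity $2d_J(\bar R\cap S,\bar R\sem S)=d_J(\bar R\cap S)+d_J(\bar R\sem S)-d_J(R)$ then reduces the proof to the same case split used in Lemma~\ref{l:bR}: if $d_H(\bar R\cap S,\bar R\sem S)\ge 1$ then Lemma~\ref{l:h} directly gives $d_J(\bar R\cap S,\bar R\sem S)\ge\mu\ge\lceil(k-6)/2\rceil$, and otherwise $d_H(\bar R\cap S)+d_H(\bar R\sem S)=d_H(R)$ and the three bounds combine to $2d_J(\bar R\cap S,\bar R\sem S)\ge k-6$. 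The main obstacle I anticipate is the laminarity verification in (ii), and specifically ruling out, via uniqueness of the maximal element of $\RR(S)$ rather than mere maximality, a second maximal element of $\RR(S)$ disjoint from $R$.
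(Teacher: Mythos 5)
Your part~(i) mirrors the paper's: minimality of $R$ in $\RR(S)$ makes $R\cap S$ and $R\sem S$ compatible with $\RR$, and Lemma~\ref{l:cut}(ii) applied at $C=R$ gives the bound.

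For part~(ii) you depart from the paper, and in doing so you have found a genuine flaw in the paper's proof. The paper applies Lemma~\ref{l:bR} with $T=\bar R\cap S$; as the paper itself observes just before that lemma, its hypothesis (laminarity of $\RR\cup\{T,\bar R\sem T\}$) forces $R$ to be inclusion-maximal in all of $\RR$. But the hypothesis of Lemma~\ref{l:Rm}(ii) only gives maximality of $R$ within $\RR(S)$: a set $R'\in\RR$ with $R'\supsetneq R$ is not excluded, and any such $R'$ must contain $S$ (else $R'\in\RR(S)$, contradicting maximality of $R$ there), hence $R'\supseteq R\cup S$; if this containment is strict, then $R'\sem(R\cup S)$, $R'\cap(R\cup S)=R\cup S$ and $(\bar R\sem S)\sem R'=V\sem R'$ are all non-empty, so $R'$ overlaps $\bar R\sem S$, $\RR\cup\{\bar R\sem S\}$ is not laminar, and Lemma~\ref{l:bR} is inapplicable. (Such $R'$ do occur; nothing in the hypotheses $1\le|\RR(S)|\le|\RR(\bar S)|$ excludes them.) Your repair is sound: you verify that $\RR\cup\{\bar R\cap S\}$ and $\RR\cup\{R\cup S\}$ are laminar, where the case $R'\supsetneq R$ now works because $R'\supseteq R\cup S$, and the case $R'\cap R=\empt$ is resolved using \emph{uniqueness} (not mere maximality) of the top element of $\RR(S)$ to exclude $R'\in\RR(S)$. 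Applying Corollary~\ref{c:bound} to $\bar R\cap S$ and to $R\cup S$, and using $d_J(\bar R\sem S)=d_J(R\cup S)$ and $d_H(\bar R\sem S)=d_H(R\cup S)$, then reproduces exactly the arithmetic inside Lemma~\ref{l:bR} (together with Lemma~\ref{l:C} for $d_J(R)$ and Lemma~\ref{l:h} for the case $d_H(\bar R\cap S,\bar R\sem S)\ge 1$). This is a correct proof of the statement and an actual improvement over the paper's argument.
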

\begin{proof}
We prove (i). The minimality of $R$ implies that if $T \in \{R \cap S,R \sem S\}$ 
then $\RR \cup \{T\}$ is laminar.
Therefore by Lemma \ref{l:cut}(ii) 
$d_J(R \cap S,R \sem S) \ge \mu$.

We prove (ii). The maximality of $R$ implies that if $T \in \{\bar{R} \cap S,\bar{R} \sem S\}$ 
then $\RR \cup \{T\}$ is laminar.
Therefore by Lemma \ref{l:bR} 
$d_J(\bar{R} \cap S,\bar{R} \sem S) \ge \lceil (k-6)/2 \rceil$.
\end{proof}

\begin{lemma} \label{l:count}
If $1 \le |\RR(S)| \le |\RR(\bar{S})|$ then $d_J(S) \ge k-4$. 
\end{lemma}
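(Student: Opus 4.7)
The plan is to split on the structure of the maximal elements of the family $\RR(S)$. Since $\RR$ is laminar, so is $\RR(S)$, and in particular the inclusion-maximal members of $\RR(S)$ are pairwise disjoint. I would handle the cases ``$\RR(S)$ has at least two maximal elements'' and ``$\RR(S)$ has a unique maximal element'' separately, in each case producing two disjoint subsets of $\de_J(S)$ whose cardinalities add to at least $k-4$.

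Case 1: suppose $R_1, R_2$ are two distinct maximal sets of $\RR(S)$. For $i=1,2$ pick a set $R'_i \subseteq R_i$ that is inclusion-minimal in $\RR(S)$. By Lemma~\ref{l:Rm}(i), $d_J(R'_i\cap S, R'_i\sem S)\ge \mu$ for each $i$. Since $R'_1 \subseteq R_1$ and $R'_2 \subseteq R_2$ and $R_1\cap R_2=\emptyset$, the two edge sets $\de_J(R'_1\cap S, R'_1\sem S)$ and $\de_J(R'_2\cap S, R'_2\sem S)$ are disjoint subsets of $\de_J(S)$; therefore $d_J(S)\ge 2\mu \ge k-3 \ge k-4$.

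Case 2: suppose $R$ is the unique inclusion-maximal element of $\RR(S)$. Pick $R'\subseteq R$ inclusion-minimal in $\RR(S)$ (possibly $R'=R$). Apply Lemma~\ref{l:Rm}(i) to $R'$ to obtain $d_J(R'\cap S, R'\sem S)\ge \mu$, and apply Lemma~\ref{l:Rm}(ii) to $R$ (which is now a unique maximal element) to obtain $d_J(\bar R\cap S, \bar R\sem S)\ge \lceil (k-6)/2\rceil$. The first edge set lies entirely inside $R$ and the second entirely inside $\bar R$, so they are disjoint subsets of $\de_J(S)$. Hence
\[
d_J(S)\ge \mu + \Bigl\lceil \tfrac{k-6}{2}\Bigr\rceil = \Bigl\lceil \tfrac{k-3}{2}\Bigr\rceil + \Bigl\lceil \tfrac{k-6}{2}\Bigr\rceil = k-4,
\]
where the last equality is routine to verify for both parities of $k$.

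The main thing to watch is that the two edge sets produced in each case must indeed be disjoint subfamilies of $\de_J(S)$; this is not automatic from the lemmas but follows from the laminarity of $\RR$, which confines each counted edge set to the interior of one of the disjoint regions determined by the chosen element of $\RR(S)$. A small point that should also be checked is that the minimal element $R'_i$ of $\RR(S)$ inside $R_i$ exists and is itself in $\RR(S)$ (immediate since $\RR(S)$ is finite and nonempty), and that the previous lemma's hypothesis ``$V\sem(R\cup S)\ne\emptyset$'' is only needed to invoke Lemma~\ref{l:Rm} under the standing assumption $|\RR(S)|\le|\RR(\bar S)|$, which is given.
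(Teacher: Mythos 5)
Your proof is correct and follows essentially the same approach as the paper's, invoking Lemma~\ref{l:Rm} in the same way to count edges in disjoint regions. The only (harmless) difference is the case division: you branch on whether $\RR(S)$ has one or more than one maximal element, while the paper branches on whether $\RR(S)$ contains two disjoint sets, i.e., whether it is a chain; when $\RR(S)$ has a unique maximal set but more than one minimal set the two classifications disagree, but both arguments yield $d_J(S)\ge k-4$ in that situation.
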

\begin{proof}
Suppose that there are two disjoint sets in $\RR(S)$. 
Then there are two minimal sets in $\RR(S)$ that are disjoint, 
say $R_1,R_2$; see Fig.~\ref{f:count}(b).
Then by Lemma~\ref{l:Rm}(i)
\[
d_J(S) \ge d_J(R_1 \cap S,R_1 \sem S)+ d_J(R_2 \cap S,R_2 \sem S) \ge \mu+\mu > k-4 \ .
\]
Otherwise, $\RR(S)$ is a nested family with a unique maximal set $R_1$ and a unique minimal set $R_2$; 
see Fig.~\ref{f:count}(c) and note that $R_2 \subs R_1$ and possibly $R_1=R_2$. 
Then by Lemma~\ref{l:Rm} we get
\[
d_J(S) \ge d_J(\bar{R}_1 \cap S,\bar{R}_1 \sem S)+ d_J(R_2 \cap S,R_2 \sem S) \ge \lceil (k-6)/2 \rceil + \mu=k-4 \ ,
\]
concluding the proof. 
\end{proof}

\begin{figure} \centering \includegraphics[scale=0.5]{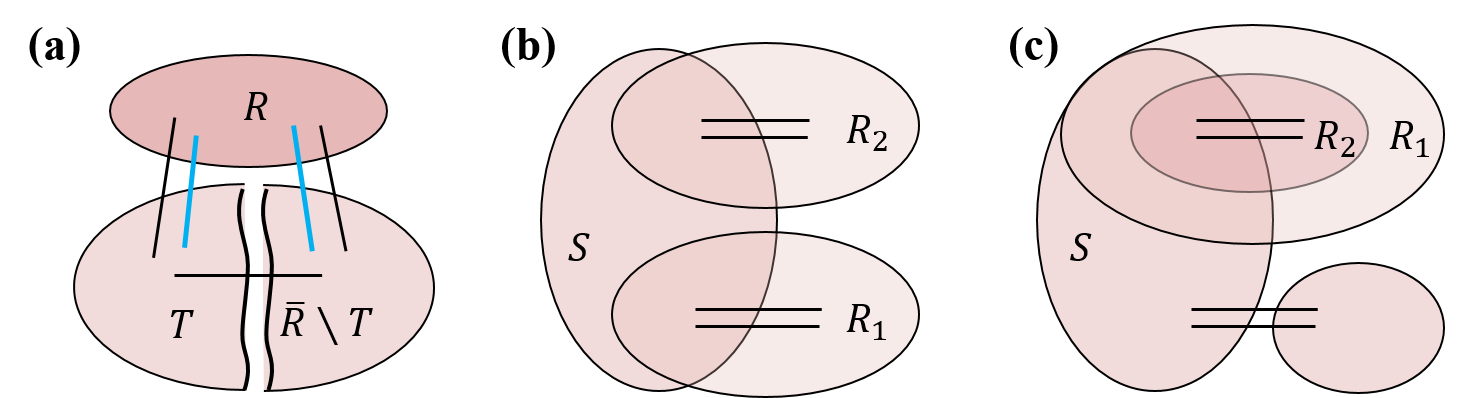}
\caption{Illustration to the proofs of Lemmas \ref{l:bR} and \ref{l:count}.}
\label{f:count} \end{figure}

This concludes the proof of the connectivity guarantee of Algorithm~\ref{alg:main}.
Now we show that the algorithm terminates after a polynomial number of iterations.

\begin{lemma} \label{l:m}
The algorithm terminates after $O(n)$ iterations, where $n=|V|$.
\end{lemma}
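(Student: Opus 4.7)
The plan is to classify each execution of the while loop by which action terminates it: after line~4 removes every $0$-edge, the iteration either (a) moves every $1$-edge into $I$ and restarts, (b) contracts a core, or (c) adds a ghost edge. I would bound each class separately.

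For (b), I would split by the size of the contracted core $C$. Non-singleton contractions ($|C|\ge 2$) strictly decrease $|V|$, and hence number at most $n$. For a singleton core $\{v\}$, $f$-positivity combined with Lemma~\ref{l:dH} forces $v\notin U$: if $v\in U$ then $d_I(v)+2d_H(v)\ge k-2$, so $f(\{v\})\le 0$ and $\{v\}$ cannot be a core. After contracting $\{v\}$, $v$ lies in $U$ permanently. Moreover, every new node $v_C$ produced by any contraction is placed into $U$ at its birth, so it too is ineligible for singleton contraction. Hence only the $n$ original vertices of $V_0$ can be singleton-contracted, each at most once, bounding type-(b) iterations by $2n$.

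For (c), each ghost iteration removes two vertices from $U$, and by Lemma~\ref{l:dH} a vertex never returns to $U$ once removed. Vertices enter $U$ only through type-(b) iterations (each placing exactly one new vertex there), so the number of type-(c) iterations is at most half the number of type-(b) iterations, and hence $O(n)$.

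For (a), each iteration strictly decreases $|E|$. A sharper $O(n)$ bound follows by invoking the standard extreme-point argument for~(\ref{LP}): whenever $0<x_e<1$ for all $e$, the tight cut constraints can be chosen to form a laminar family on the current vertex set, whence $|E|\le 2|V|-1=O(n)$. Combined with the bounds on types (b) and (c), and the fact that a run of consecutive type-(a) iterations between any two contract/ghost events is limited by the current $|E|$, this yields $O(n)$ type-(a) iterations as well. The main obstacle is precisely this last amortization: tracking how transient $1$-edges and the evolving vertex set interact across consecutive LP re-solves requires a careful potential argument, whereas the contract-and-ghost counts above follow almost directly from Lemma~\ref{l:dH} and from the invariant that newly-created vertices always enter $U$ immediately.
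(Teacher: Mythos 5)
Your three-way classification of iterations (round-to-integer, contract, ghost) and the counts you give for the latter two are correct and essentially match the paper's accounting, just phrased more elementarily: the paper bounds contractions by $|\RR|\le 2n-1$ (laminarity of the family of contracted sets) and ghost edges by observing that $H$ is a $2$-matching on $\RR$, whereas you count non-singleton contractions by the drop in $|V|$, singleton contractions by the one-shot use of original vertices, and ghost iterations by the flow of vertices into and out of $U$. Both routes rely on Lemma~\ref{l:dH}, and both give $O(n)$.

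Where you go astray is part (a), and in two ways. First, your claim that for LP~(\ref{LP}) with $f$ as in~(\ref{f}) ``the tight cut constraints can be chosen to form a laminar family on the current vertex set'' is \emph{false in general}: once $U\neq\empt$ the function $f$ is not skew-supermodular, and the failure of a laminar $x$-defining family to exist is precisely the case treated by Lemma~\ref{l:no}. So you cannot re-invoke the bound $|E|\le 2|V|-1$ at every iteration. Second, the ``careful potential argument'' you worry about is not needed. You only need the laminar bound \emph{once}, at the very first LP solve, where $U=H=\empt$ and the LP is the plain Cut-LP (for which Jain's laminar-family argument applies). After the first iteration removes the $0$-edges and rounds the $1$-edges, $|E|\le 2n-1$. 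From then on $E$ is monotone non-increasing: integral edges migrate to $I$, $0$-edges are deleted, ghost edges live in $H$ not $E$, and contractions create no loops in $E$ by Lemma~\ref{l:main}(i). Since every type-(a) iteration after the first removes at least one edge from $E$, there are at most $2n-1$ of them after the first, with no amortization over contract/ghost events required. This is exactly what the paper's proof does when it writes the total as $|\LL|+|\RR|+|H|\le 3(2n-1)$.
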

\begin{proof}
Let $x$ be an extreme point computed at the first iteration of the algorithm. 
It is known (c.f. \cite{GGTW}) that $|\{0<x_e <1:e \in E\}| \le 2n-1$, namely, at most $2n-1$ variables $x_e$ are fractional.
This follows from two facts. 
The first is that after the integral entries are substituted, the fractional entries are determined 
by a full rank system of constraints $\{x(\de_E(S))=k:S \in \LL\}$ where $\LL$ is laminar, c.f. \cite{J,GGTW}. 
The second is that a laminar family on a set of $n$ elements has size at most $2n-1$.
The latter fact also implies that $|\RR| \le 2n-1$.
By Lemma~\ref{l:dH}, the ghost edges form a $2$-matching on $\RR$, thus $|H| \le |\RR| \le 2n-1$. 
At each iteration we remove an edge from $E$, or add a set to $\RR$, or add a ghost edge.
Hence the number of iterations is bounded by 
$|\LL|+|\RR|+|H| \le (2n-1)+(2n-1)+(2n-1)=3(2n-1)=O(n)$.
\end{proof}

%%%%%%%%%%%%%%%%%%%%%%%%%%%%%
\section{Properties of extreme point solutions (Lemma~\ref{l:main})} \label{s:main}
%%%%%%%%%%%%%%%%%%%%%%%%%%%%%

Here we will prove Lemma~\ref{l:main}. 
Fix an extreme point solution $x$ as in the Lemma. 
Then there exists a family $\LL$ of $f$-positive $x$-tight sets such that 
$x$ is the unique solution to the equation system $\{x(\de_E(S))=f(S):S \in \LL\}$; namely, 
$|\LL|=|E|$ and the characteristic vectors of the sets $\{\de_E(S):S \in \LL\}$ are linearly independent.  
We will call such a family $\LL$ {\bf $x$-defining}.
The following was essentially proved in \cite{HKZ}.

\begin{lemma}[\cite{HKZ}] \label{l:L}
If there exists a laminar $x$-defining family $\LL$ then there is an $x$-core $C$ 
such that $d_E(C) \in \{2,3\}$ and $f(C) \in \{1,2\}$;
moreover, no edge in $E$ has both ends in $C$.
\end{lemma}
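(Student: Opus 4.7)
The plan is to adapt Jain's classical token-counting argument for iterative rounding to the present setting. Fix any laminar $x$-defining family $\LL$; by the extreme-point hypothesis $|\LL| = |E|$ and the vectors $\{\chi_{\de_E(S)} : S \in \LL\}$ are linearly independent.

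Distribute two tokens to each edge $e \in E$, one per endpoint, and route each endpoint $v$'s token to the smallest set $L(v) \in \LL$ containing $v$ (discarding the token if no such set exists). Writing $t(S)$ for the tokens collected by $S \in \LL$, we have $\sum_{S \in \LL} t(S) \le 2|E| = 2|\LL|$. The crucial claim is that every $S \in \LL$ receives $t(S) \ge 2$ tokens. For a minimal $C \in \LL$ this is immediate, since $t(C) \ge d_E(C) \ge 2$: the tight integer value $f(C) \ge 1$ cannot equal a single $x_e \in (0,1)$, forcing $|\de_E(C)| \ge 2$. For a non-minimal $S$, the bound $t(S) \ge 2$ follows from a standard linear-independence argument: if $S$ collected $\le 1$ token, then $\chi_{\de_E(S)}$ would lie in the linear span of the characteristic vectors of $S$'s children in $\LL$ together with at most one extra edge, contradicting the $x$-defining property of $\LL$.

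Combining $\sum_S t(S) \le 2|\LL|$ with $t(S) \ge 2$ for every $S$ forces $\sum_S t(S) = 2|\LL|$ and $t(S) = 2$ for every $S \in \LL$, with no token discarded. Specializing to a minimal $C \in \LL$, each endpoint-incidence in $C$ contributes to $t(C)$, so $t(C) = d_E(C) + 2 \cdot |\{e \in E : e \subseteq C\}| = 2$. Since $d_E(C) \ge 2$, we conclude $d_E(C) = 2$ and no edge has both ends in $C$. Then $f(C) = x(\de_E(C))$ is a positive integer strictly smaller than $2$, so $f(C) = 1$; in particular $d_E(C) \in \{2,3\}$ and $f(C) \in \{1,2\}$ as required.

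Finally, we check that some such $C$ is a true $x$-core (inclusion-minimal $f$-positive $x$-tight), not merely a minimal element of $\LL$. Any $f$-positive $x$-tight $C' \subsetneq C$ inherits $\de_E(C') \subseteq \de_E(C)$ (because $C$ has no internal edge, every edge incident to $C'$ either stays inside $C$ via an internal edge of $C$, which does not exist, or crosses $\de_E(C)$), hence $|\de_E(C')| \le 2$; iterating this replacement yields a smallest $f$-positive $x$-tight set inside $C$, which is a bona fide $x$-core with $d_E \in \{2,3\}$, $f \in \{1,2\}$, and no internal edge. The main obstacle in executing this plan is the non-minimal lower bound $t(S) \ge 2$, whose linear-independence argument must be written carefully with respect to the modified cut function $f$ of (\ref{f}); however, the relaxation offsets in $f$ are constants on each set and drop out of the independence computation, so the argument is essentially identical to the one in \cite{HKZ}.
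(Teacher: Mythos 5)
Your overall plan (token counting on a laminar defining family, then descending to a genuine $x$-core inside a small minimal set) matches the paper's, and your final paragraph passing from a minimal $C \in \LL$ to an $x$-core is essentially correct. However, there is a real gap in the token-counting step.

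You claim that \emph{every} $S \in \LL$ receives $t(S) \ge 2$ tokens when tokens are routed to the smallest containing set, and you justify the non-minimal case by linear independence. This fails for sets with two or more children. If $S$ has children $R_1,\ldots,R_m$ with $m \ge 2$ and every vertex of $S$ lies in some $R_i$, then $t(S)=0$; and this is perfectly compatible with linear independence of $\LL$, because edges running between distinct children $R_i,R_j$ contribute $2$ to $\sum_i \chi_{R_i}$ but $0$ to $\chi_S$, so $\chi_S$ need not equal $\sum_i \chi_{R_i}$ and no dependence arises. (A tiny example: $V=\{a,b,c,d\}$, $R_1=\{a\}$, $R_2=\{b\}$, $S=\{a,b\}$, edges $ab,ac,bd$ gives linearly independent $\chi_{R_1},\chi_{R_2},\chi_S$ with $t(S)=0$.) The linear-independence argument you invoke is the \emph{single-child} case of Jain's induction; it does not cover the general case. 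Consequently the chain $\sum_S t(S) = 2|\LL|$, hence $t(S)=2$ for all $S$, hence $d_E(C)=2$ for every minimal $C$, is unjustified — and the last consequence is indeed too strong (minimal sets with $d_E=3$ do occur, which is precisely why the lemma only guarantees $d_E(C)\in\{2,3\}$).

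The paper's argument avoids this by using Jain's counting as a proof by \emph{contradiction}: assume every minimal set has $d_E \ge 4$, then a \emph{recursive redistribution} of tokens (not a direct count of $t(S)$) shows each set in $\LL$ can be given $\ge 2$ tokens while every root gets $\ge 4$, yielding $|E| > |\LL|$, a contradiction. This only produces \emph{some} minimal $S$ with $d_E(S) \le 3$, not $d_E=2$ for all minimal sets. The rest of the paper's proof (that no edge lies inside a minimal $S$ because each $x_e$ must appear in some defining equation, and that an $x$-core $C \subsetneq S$ satisfies $\de_E(C) \subseteq \de_E(S)$ and $d_E(C) \ge 2$) then gives exactly $d_E(C)\in\{2,3\}$ and $f(C)\in\{1,2\}$. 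To repair your write-up, replace the direct-count claim by the contrapositive redistribution argument, and weaken your conclusion to $d_E(C) \le 3$ rather than $d_E(C)=2$.
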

\begin{proof}
Let $\CC_\LL$ be the family of minimal sets in $\LL$. 
It was proved by HKZ \cite{HKZ} that there is $S \in \CC_\LL$ such that $d_E(S) \le 3$
and no edge in $E$ has both ends in $S$; 
we provide a proof sketch for completeness of exposition.
Suppose to the contrary that $d_E(S) \ge 4$ for every $S \in \CC_\LL$. 
Assign $2$ tokens to every edge $uv$, % (so the total amount of tokens is exactly $2|E|$), 
and then reassign these tokens -- 
one to the smallest set in $\LL$ that contains $u$ and 
one to the smallest set in $\LL$ that contains $v$ (if such sets exist). 
The paper of Jain \cite{J} shows that then for any $S \in \LL$, the tokens assigned to sets in 
$\LL(S)=\{T \in \LL:T \subseteq S\}$ can be redistributed such that $S$ gets at least $4$ tokens 
and every $T \in \LL(S)$ gets at least $2$ tokens, giving the contradiction $|E|>|\LL|$. 
The proof in \cite{J} is by induction on $|\LL(S)|$.
If $S \in \CC_\LL$ is a leaf then $S$ already has $4$ tokens, since $d_E(S) \ge 4$ for every $S \in \CC_\LL$. 
Else, $S$ can collect $2$ tokens from each child in $\LL(S)$, % (a maximal inclusion set in $\LL(S)$ properly contained in $S$), 
so if $S$ has at least two children then we are done. 
Else, $S$ has exactly one child, and then \cite{J} shows that by linear independence, 
there are $2$ tokens assigned to $S$, 
which together with the extra $2$ tokens of the child of $S$ gives the required $4$ tokens. 
This shows that there is $S \in \CC_\LL$ such that $d_E(S) \le 3$.

Note that for any $T \in \CC_\LL$, no edge in $E$ has both ends in $T$; 
this is since for every $e \in E$ there is $S_e \in \LL$ such that $e \in \de_E(S_e)$, 
since the variable $x_e$ must appear in at least one equation among $\{x(\de_E(S))=f(S):S \in \LL\}$.
Let $S \in \CC_\LL$ be such that $d_E(S) \le 3$.
If $S$ is an $x$-core then we are done. 
Otherwise, let $C$ be an $x$-core properly contained in $S$.
We claim that $\de_E(C) \subs \de_E(S)$, and that $d_E(C) \ge 2$.
This follows from the following observations.
\begin{enumerate}
\item
$d_E(C) \ge 2$, since $C$ is $f$-positive and since we assume that $0<x_e<1$ for all $e \in E$.
\item
For every $e \in \de_E(C)$ there is $S_e \in \LL$ such that $e \in \de_E(S_e)$, 
since the variable $x_e$ must appear in at least one equation among $\{x(\de(S))=f(S):S \in \LL\}$.
\end{enumerate}
This gives $d_E(C) \in \{2,3\}$, as required.
\end{proof}

In the rest of this section we will prove the following.

\begin{lemma} \label{l:no}
Under the assumptions of Lemma~\ref{l:main},
if no laminar $x$-defining family exists then there are $u,v \in U$ with $d_I(u,v) \ge \mu$ and $d_H(u,v)=0$.
\end{lemma}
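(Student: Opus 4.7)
The plan is to argue by contradiction via iterated uncrossing on an $x$-defining family. Since $x$ is an extreme point of $P$ with all $0<x_e<1$, there exists a family $\LL$ of $f$-positive $x$-tight sets of size $|E|$ whose cut characteristic vectors are linearly independent; I choose such an $\LL$ minimizing the crossing measure $\sum_{S\in\LL}|S|\cdot|V\sem S|$. By hypothesis no laminar defining family exists, so $\LL$ must contain a crossing pair $A,B$. Partition $V$ into the four quadrants $a=A\sem B$, $b=B\sem A$, $c=A\cap B$, $d=V\sem(A\cup B)$ with $|a|,|b|,|c|\ge 1$; the degenerate case $d=\empt$ is dispatched immediately, since then $a=V\sem B$ and $b=V\sem A$ are themselves $f$-positive and $x$-tight, and replacing $A,B$ with the disjoint pair $a,b$ yields a less-crossing defining family, contradicting minimality.

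The algebraic backbone is that $g(S):=k-d_I(S)-2d_H(S)$ is both supermodular and posimodular (as $d_I,d_H$ are cut functions), while $f(S)=g(S)-\mathrm{pen}(S)$ with $\mathrm{pen}(S)\in\{0,2\}$ equal to $2$ exactly when $S$ or $V\sem S$ is a singleton $\{u\}$, $u\in U$. When all four quadrants are non-empty, $\mathrm{pen}(A)=\mathrm{pen}(B)=0$, and combining the cut identities for the $\cap/\cup$- and $\sem$-uncrossings with tightness of $A,B$ shows that any exchange replacing $A$ or $B$ in $\LL$ by an $f$-positive $x$-tight set from $\{A\cap B,\,A\cup B\}$ or from $\{A\sem B,\,B\sem A\}$ would strictly decrease the crossing measure. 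Minimality therefore forbids both exchanges, forcing one of $\{c,d\}$ and one of $\{a,b\}$ to be a $U$-singleton; here the hypothesis $d_I(u)+2d_H(u)\ge k-2$ enters as the fact that $f(\{u\})\le 0$ for every $u\in U$, so $U$-singletons are the only proper subsets whose $f$-value can drop to $0$ in this uncrossed position (one verifies, using $f$-positivity of $A,B$ and the cut identities, that no non-singleton uncrossed quadrant can have $f\le 0$).

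Denoting the two singleton quadrants $\{u\}$ and $\{v\}$, one of $A,B,\bar A,\bar B$---each $x$-tight by the symmetry $f(S)=f(\bar S)$---is exactly the two-element set $T=\{u,v\}\subs U$. Expanding $x(\de_E(T))=k-d_I(T)-2d_H(T)$ via $d_I(T)=d_I(u)+d_I(v)-2d_I(u,v)$ and $d_H(T)=d_H(u)+d_H(v)-2d_H(u,v)$, substituting the per-vertex hypothesis $d_I(u)+2d_H(u)\ge k-2$ and its $v$-analogue, and using $f(T)\ge 1$, a short rearrangement yields $d_I(u,v)+2d_H(u,v)\ge\mu=\lceil(k-3)/2\rceil$. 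Since $d_H(u),d_H(v)\le 1$ imply $d_H(u,v)\le 1$, either $d_H(u,v)=0$ and $d_I(u,v)\ge\mu$ as required, or $d_H(u,v)=1$---in which case the unique ghost edge at both $u$ and $v$ is the edge $uv$, so $d_H(T)=0$; this configuration is then ruled out by revisiting the uncrossing, where under the invariant $d_H(u)\le 1$ the set $T$ itself takes the shape of a small contractible core, contradicting the assumption that no laminar defining family exists.

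The main obstacle will be the second paragraph: pinning down that the only $f\le 0$ obstruction to uncrossing is a $U$-singleton quadrant (ruling out a non-singleton $S$ with $d_I(S)+2d_H(S)\ge k$ by leveraging $f$-positivity of $A,B$), and the third paragraph's clean closure of the $d_H(u,v)=1$ subcase without weakening $\mu$ by an additive constant. This is exactly where the algorithm's invariant $d_H(u)\le 1$ for $u\in U$---enforced by the no-parallel-ghost-edges rule---carries its weight, keeping the penalty vocabulary small enough that $U$-singletons are the unique bad uncrossed sets.
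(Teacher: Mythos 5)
Your setup correctly mirrors the paper through the algebraic identity $d_I(A)+2d_H(A)=[d_I(u)+2d_H(u)]+[d_I(v)+2d_H(v)]-2[d_I(u,v)+2d_H(u,v)]$, from which $d_I(u,v)+2d_H(u,v)\ge\mu$ follows; this recovers the paper's Lemma~\ref{l:mu} when $d_H(u,v)=0$.

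The genuine gap is the $d_H(u,v)=1$ subcase. You claim this configuration is ``ruled out'' because $T=\{u,v\}$ ``takes the shape of a small contractible core, contradicting the assumption that no laminar defining family exists.'' That is not a contradiction: the existence of an $x$-core is in no tension with the non-existence of a laminar $x$-defining family. (Lemma~\ref{l:L} in the paper shows that a laminar defining family \emph{implies} a small core; the converse is never asserted.) Also, $T$ being a core does not by itself give $d_E(T)\in\{2,3\}$, which is what the algorithm would need to contract it. So the case $d_H(u,v)=1$ is left open, and for that case your inequality only yields $d_I(u,v)\ge\mu-2$, which is too weak. The paper handles this case very differently (Lemma~\ref{l:w}): it does \emph{not} derive a contradiction, but instead introduces $\psi(S)=d_I(S)+2d_H(S)+x(\de_E(S))$, analyzes the two remaining quadrants $W=\bar A\cap B$ and $Z=\bar A\sem B$, shows $\psi(W)+\psi(Z)\le 3k-2(2+\mu)<2k$, concludes one of them is a $U$-singleton $\{w\}$ with $d_H(u,w)=0$ (or $d_H(v,w)=0$), and finishes with the same submodular counting applied to $B$ instead of $A$. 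Your proof contains none of this; it merely asserts that the bad case cannot happen, which you yourself flag as the main obstacle but do not close.

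A secondary issue: your crossing measure $\sum_{S\in\LL}|S|\cdot|V\sem S|$ is invariant under complementation, so replacing $A,B$ by $\bar A,\bar B$ when $\bar A\cap\bar B=\empt$ does not decrease it; the claimed dispatch of the degenerate quadrant does not go through as written. The paper avoids this by defining ``$x$-uncrossable'' pairs directly and observing that $\bar A\cap\bar B=\empt$ forces $\chi_{A\sem B}=\chi_B$ and $\chi_{B\sem A}=\chi_A$, which makes the pair uncrossable. You would need a similar fix, or a different potential function.
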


In the rest of this section assume that the assumptions of Lemma~\ref{l:main} hold 
($d_H(u) \le 1$ and $d_I(u)+2d_H(u) \ge k-2$ for all $u \in U$),
and that no laminar $x$-defining family exists.

\begin{lemma} \label{l:AB}
% If no laminar $x$-defining family exists then 
There are $f$-positive $x$-tight sets $A,B$ that cross 
(namely, all the four sets $A \cap B, A \sem B, B \sem A, \bar{A} \cap \bar{B}$ are non-empty) 
such that the following holds:
\begin{enumerate}[(i)]
\item
$A \cap B=\{u\}$ or $\bar{A} \cap \bar{B}=\{u\}$ for some $u \in U$. 
% (namely, at least one of $A \cap B,\bar{A} \cap \bar{B}$ is a singleton from $U$).
\item
$A \sem B=\{v\}$ or $B \sem A=\{v\}$ for some $v \in U$.
% (namely, at least one of $A \sem B,B \sem A$ is a singleton from $U$).
\end{enumerate}
\end{lemma}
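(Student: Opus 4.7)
The plan is to take an $x$-defining family $\LL$ that is ``as laminar as possible'' --- concretely one minimising $\Phi(\LL) = \sum_{S \in \LL} |S| \cdot |V \sem S|$, a potential that strictly decreases whenever a crossing pair $A, B$ in $\LL$ is replaced by its intersection/union pair $\{A \cap B, A \cup B\}$ or its symmetric-difference pair $\{A \sem B, B \sem A\}$. Since no laminar $x$-defining family exists, $\LL$ must contain a crossing pair $A, B$; I will argue that any such pair necessarily satisfies both (i) and (ii), as otherwise a standard uncrossing substitution would contradict the minimality of $\LL$.

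First I would note that neither $A$ nor $B$ can itself lie in the penalty set $U^* := \{\{u\}, V \sem \{u\} : u \in U\}$: if $A = \{u\}$, crossing forces $u \in B$, so $A \sem B = \empt$, a contradiction; the case $A = V \sem \{u\}$ is symmetric. Hence $f(A) = g(A)$ and $f(B) = g(B)$, where $g(S) := k - d_I(S) - 2 d_H(S)$ is the ``unpenalised'' cut-deficiency function, supermodular both for $\cap/\cup$ and for set-difference.

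Next I would run the intersection/union uncrossing. Combining the cut identity
\[
x(\de_E(A)) + x(\de_E(B)) = x(\de_E(A \cap B)) + x(\de_E(A \cup B)) + 2\, x(\de_E(A \sem B, B \sem A))
\]
with LP feasibility, and observing that for a crossing pair $A \cap B \in U^*$ can only mean $A \cap B = \{u\}$ with $u \in U$, while $A \cup B \in U^*$ can only mean $\bar A \cap \bar B = \{u\}$ with $u \in U$, one sees that \emph{if (i) fails} then the $-2$ penalty drops out on the uncrossed sets, giving
\[
f(A \cap B) + f(A \cup B) = g(A \cap B) + g(A \cup B) \ge g(A) + g(B) = f(A) + f(B).
\]
Together with the identity this forces $A \cap B, A \cup B$ to be $x$-tight, $\de_E(A \sem B, B \sem A) = \empt$, and $\chi_{\de_E(A)} + \chi_{\de_E(B)} = \chi_{\de_E(A \cap B)} + \chi_{\de_E(A \cup B)}$ in $\mathbb{R}^E$. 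Substituting $\{A, B\} \to \{A \cap B, A \cup B\}$ in $\LL$ then yields an $x$-defining family of strictly smaller $\Phi$, contradicting the choice of $\LL$; hence (i) must hold. The argument for (ii) is entirely parallel using the symmetric-difference identity
\[
x(\de_E(A)) + x(\de_E(B)) = x(\de_E(A \sem B)) + x(\de_E(B \sem A)) + 2\, x(\de_E(A \cap B, \bar A \cap \bar B));
\]
the only obstruction on a crossing pair is $A \sem B = \{v\}$ or $B \sem A = \{v\}$ with $v \in U$, which is exactly~(ii).

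The hard part will be the $f$-positivity bookkeeping required to legitimately substitute the uncrossed sets into $\LL$. Integrality of $f$ together with $f(A) + f(B) \ge 2$ gives $f(A \cap B) + f(A \cup B) \ge 2$, so at least one of the two is positive; in the boundary case where the other vanishes or turns non-positive, the cut identity forces $x(\de_E(\cdot)) = 0$ on that set, collapsing its cut vector in $\mathbb{R}^E$, so a single-set substitution (with a more refined potential comparison) still restores the strict decrease of $\Phi$. Tracking this degenerate subcase is the most delicate part of the argument, but it is a standard iterative-rounding technicality and does not alter the structural dichotomy that drives~(i) and~(ii).
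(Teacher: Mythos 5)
Your proposal takes a genuinely different route from the paper. The paper works with the whole family $\TT$ of $f$-positive $x$-tight sets, takes a \emph{maximal laminar} subfamily $\LL \subseteq \TT$, and then (since by hypothesis $\span(\LL) \ne \span(\TT)$) picks a set $A \in \TT$ with $\chi_{\de_E(A)} \notin \span(\LL)$ overlapping the \emph{fewest} members of $\LL$; the contradiction is that the uncrossing relation would force $\chi_{\de_E(A)}$ into $\span(\LL)$, so no substitution into the family is ever performed. You instead take an $x$-defining \emph{basis} $\LL$ minimizing the potential $\Phi$ and try to show that uncrossing a crossing pair $A,B \in \LL$ strictly decreases $\Phi$. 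The core algebraic observation is the same in both — supermodularity of $g(S)=k-d_I(S)-2d_H(S)$ forces both $x$-tightness of the uncrossed sets and $\chi_A+\chi_B=\chi_{A\cap B}+\chi_{A\cup B}$, and the only obstruction is when a singleton of $U$ or its complement appears, which is exactly (i)/(ii) — and your $\Phi$-decrease calculations for the pair substitutions are correct.

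However, the step ``substituting $\{A,B\}\to\{A\cap B,A\cup B\}$ in $\LL$ yields an $x$-defining family'' has a genuine gap that your $f$-positivity bookkeeping does not address. For $\LL'=(\LL\sem\{A,B\})\cup\{A\cap B,A\cup B\}$ to be $x$-defining you need it to be a basis of size $|E|$, i.e.\ linearly independent and not collapsing. Writing $\LL_0=\LL\sem\{A,B\}$, the relation $\chi_A+\chi_B=\chi_{A\cap B}+\chi_{A\cup B}$ only guarantees $\mathrm{rank}(\LL_0\cup\{A\cap B,A\cup B\})\ge |\LL_0|+1=|E|-1$; it can well be exactly $|E|-1$, for instance when $\chi_{A\cap B}\in\span(\LL_0)$ (e.g.\ $A\cap B\in\LL_0$ already, or $\chi_{A\cap B}$ happens to be a combination of vectors in $\LL_0$). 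In that case no choice of sets among $\{A\cap B,A\cup B\}$ completes $\LL_0$ to a basis, so there is no valid $\LL'$ to compare against. Your proposed fix for the degenerate subcase — ``a single-set substitution with a more refined potential comparison'' — is doubly problematic: replacing $A$ by just one of $A\cap B,A\cup B$ shrinks the family to size $|E|-1$ in the $f(A\cup B)=0$ case, and even when it keeps the size, replacing a single set by its intersection or union with $B$ does \emph{not} necessarily decrease $\Phi$ (only the simultaneous pair-for-pair swap does, as your own computation shows; a single-set swap can increase $|S||V\sem S|$). This is not a routine iterative-rounding technicality: the standard argument (and the paper's) deliberately avoids ever substituting into the family, precisely to sidestep this rank issue. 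To repair your approach you would need an argument that rules out the rank drop, or you should switch to the paper's minimal-overlap argument, which never requires the uncrossed family to remain a basis.
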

\begin{proof}
For a cut $S$ let $\chi_S \in \{0,1\}^E$ be the incidence vector of $\de_E(S)$.
Let $\TT$ be the family of $f$-positive $x$-tight sets
(if $S$ is tight but not $f$-positive, then $\chi_S \equiv 0$, since $x_e>0$ for all $e \in E$).
Let us say that $A,B \in \TT$ are {\bf $x$-uncrossable} if at least one of the following holds:
\begin{enumerate}[(a)]
\item
$A \cap B,A \cup B$ are both $x$-tight and $\chi_A+\chi_B=\chi_{A \cap B}+\chi_{A \cup B}$. 
\item
$A \sem B,B \sem A$ are both $x$-tight and  $\chi_A+\chi_B=\chi_{A \sem B}+\chi_{B \sem A}$. 
\end{enumerate}
We will prove two claims that will imply the lemma.

\begin{claim*}
There are $A,B \in \TT$ that are not $x$-uncrossable.
\end{claim*}
{\em Proof.}
Let $\LL$ be a maximal laminar subfamily of $\TT$.
Let $\span(\TT)$ denote the linear space spanned by the incidence vectors of the sets in $\TT$ 
and similarly $\span(\LL)$ is defined. 
Since no laminar $x$-defining family exists and by the maximality of $\LL$, 
there is $A \in \TT \sem \LL$ such that $\chi_A \notin \span(\LL)$;
any such $A$ overlaps some set in $\LL$.
Choose such $A$ that overlaps the minimal number of sets in $\LL$, 
and let $B \in \LL$ be a set that $A$ overlaps.
We show that (a) cannot hold; the proof that (b) cannot hold is similar. 
Suppose to the contrary that (a) holds, namely that 
$\chi_A+\chi_B=\chi_{A \cap B}+\chi_{A \cup B}$ and
$A \cap B,A \cup B$ are both $x$-tight. % and at least one of them is $f$-positive. 
It is known that each of the sets $A \cap B,A \cup B$ overlaps strictly less sets in $\LL$ than $A$, c.f. \cite[Lemma~23.15]{V}.
Thus by our choice of $A$, each of these sets has its incidence vector in $\span(\LL)$, namely $\chi_{A \cap B}, \chi_{A \cup B} \in \span(\LL)$.
But $\chi_A=\chi_{A \cap B}+\chi_{A \cup B}-\chi_B$, contradicting that $\chi_A \notin \span(\LL)$.
\hfill $\Box$

\begin{claim*}
If $A,B \in \TT$ are not $x$-uncrossable then $A,B$ cross and both (i) and (ii) hold.
\end{claim*}
{\em Proof.}
We show that $A,B$ cross. If $A,B$ do not overlap then
$\{A,B\}=\{A \cap B,A \cup B\}$ or $\{A,B\}=\{A \sem B,B \sem A\}$ and we are done.
We claim that also $\bar{A} \cap \bar{B} \ne \empt$.
Otherwise, $A \sem B=\bar{B}$, $B \sem A=\bar{A}$, 
and thus $\chi_{A \sem B}=\chi_{\bar{B}}=\chi_B$ and $\chi_{B \sem A}=\chi_{\bar{A}}=\chi_A$,
implying that $A \sem B,B \sem A$ are both $x$-tight and  $\chi_A+\chi_B=\chi_{A \sem B}+\chi_{B \sem A}$. 
Now we prove that (i) holds; a proof that (ii) holds is similar, and in fact can be deduced from (i) by the symmetry of $f$. 
Suppose to the contrary that (i) does not hold. 
Denoting $\De=x(\de_E(A \sem B,B \sem A))$ we have:
\begin{eqnarray*}
f(A)+f(B) & = & x(\de_E(A))+x(\de_E(B)) \\
               & = & x(\de_E(A \cap B))+x(\de_E(A \cup B))+2\De \\
						& \ge & f(A \cap B)+f(A \cup B)+2 \De \\
						& \ge & f(A)+f(B)+2\De \ .
\end{eqnarray*}
The first equality is since $A,B$ are tight, and 
the second can be verified by counting the contribution of each edge to both sides. 
The first inequality is since $x$ is a feasible LP-solution. 
The second inequality is since none of $A,B,A \cap B,A \cup B$ is a singleton from $U$ or its complement,
hence $f$ coincides on these sets with the symmetric supermodular function $g(S)=k-d_I(S)-2d_H(S)$,
that satisfies the inequality $g(A \cap B)+g(A \cup B) \ge g(A)+g(B)$. 
Consequently, equality holds everywhere, hence $A \cap B,A \cup B$ are both tight.
Moreover, $\De=0$, and this implies $\chi_A+\chi_B=\chi_{A \cap B}+\chi_{A \cup B}$,
contradicting that $A,B$ are not $x$-uncrossable.
\hfill $\Box$

\medskip

By the first claim there exists $A,B \in \TT$ that are not $x$-uncrossable, 
while by the second claim such $A,B$ cross and satisfy both (i) and (ii), 
concluding the proof of the lemma.
\end{proof}

By the symmetry of $f$, assume w.l.o.g. that $A \cap B=\{u\}$ and $A \sem B=\{v\}$ for $u,v \in U$.

\begin{lemma} \label{l:mu}
If $d_H(u,v)=0$ then $d_I(u,v) \ge \mu$.
\end{lemma}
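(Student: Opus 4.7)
The plan is to exploit the fact that, under the WLOG assumption stated just before the lemma, the set $A$ is forced to equal $\{u,v\}$, since $A = (A\cap B) \cup (A\sem B) = \{u\}\cup\{v\}$. Crucially, $A$ is $f$-positive and $x$-tight, so the single inequality $f(A)\ge 1$ already constrains $d_I(u,v)$.

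First, I would check that $|A|=2$ and $|\bar A|\ge 2$ (the latter because $B\sem A$ and $\bar A\cap \bar B$ are nonempty disjoint subsets of $\bar A$), so neither $A$ nor $\bar A$ is a singleton from $U$, and hence the second line of the definition~(\ref{f}) applies, giving
\[
f(A) = k - d_I(A) - 2d_H(A).
\]

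Next I would expand the set degrees of $A=\{u,v\}$ into vertex degrees, using the hypothesis $d_H(u,v)=0$: one has $d_I(A) = d_I(u)+d_I(v) - 2d_I(u,v)$ and $d_H(A) = d_H(u)+d_H(v)$. Substituting and regrouping,
\[
f(A) = k - \bigl(d_I(u)+2d_H(u)\bigr) - \bigl(d_I(v)+2d_H(v)\bigr) + 2d_I(u,v).
\]

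Finally, the standing hypotheses of Lemma~\ref{l:main} give $d_I(u)+2d_H(u)\ge k-2$ and $d_I(v)+2d_H(v)\ge k-2$, so $f(A) \le 4 - k + 2d_I(u,v)$. Combined with $f(A)\ge 1$ (since $f$ is integer-valued and $A$ is $f$-positive), this yields $d_I(u,v)\ge (k-3)/2$, and integrality upgrades this to $d_I(u,v)\ge\lceil(k-3)/2\rceil=\mu$, as required.

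The main obstacle, such as it is, is bookkeeping: one has to verify that $A=\{u,v\}$ really falls under the ``otherwise'' branch of the piecewise definition~(\ref{f}) (both $A$ and $\bar A$ have size at least two), and that the $U$-hypotheses of Lemma~\ref{l:main} apply simultaneously to both endpoints of the candidate ghost edge $uv$. All of the combinatorial content has already been extracted in Lemma~\ref{l:AB}; once the crossing pair $A,B$ is in hand with the singleton pattern $A\cap B=\{u\}$, $A\sem B=\{v\}$, the result reduces to a short algebraic manipulation.
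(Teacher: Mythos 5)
Your proof is correct and takes essentially the same approach as the paper: both use $A=\{u,v\}$, expand $d_I(A)+2d_H(A)$ in terms of $d_I(u)+2d_H(u)$, $d_I(v)+2d_H(v)$, and $2d_I(u,v)$, apply the hypotheses of Lemma~\ref{l:main}, and invoke $f$-positivity of $A$ to conclude $2d_I(u,v)\ge k-3$. The only cosmetic difference is that you explicitly identify the branch of~(\ref{f}) that applies to $A$ and work with $f(A)\ge 1$, whereas the paper goes directly to $d_I(A)+2d_H(A)\le k-1$ (which holds under either branch, so the branch check is not actually needed).
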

\begin{proof}
By the assumption of Lemma~\ref{l:main}, $d_I(u)+2d_H(u) \ge k-2$ and $d_I(v)+2d_H(v) \ge k-2$.
Since $A$ is $f$-positive, $d_I(A)+2d_H(A) \le k-1$. Thus we get (see Fig.~\ref{f:w}(a))
\[
k-1 \ge d_I(A)+2d_H(A)=[d_I(u)+2d_H(u)]+[d_I(v)+2d_H(v)]-2d_I(u,v) \ge 2(k-2)-2d_I(u,v) \ .
\]
Consequently, $2d_I(u,v) \ge k-3$, as claimed.
\end{proof}

\begin{figure} \centering \includegraphics[scale=0.5]{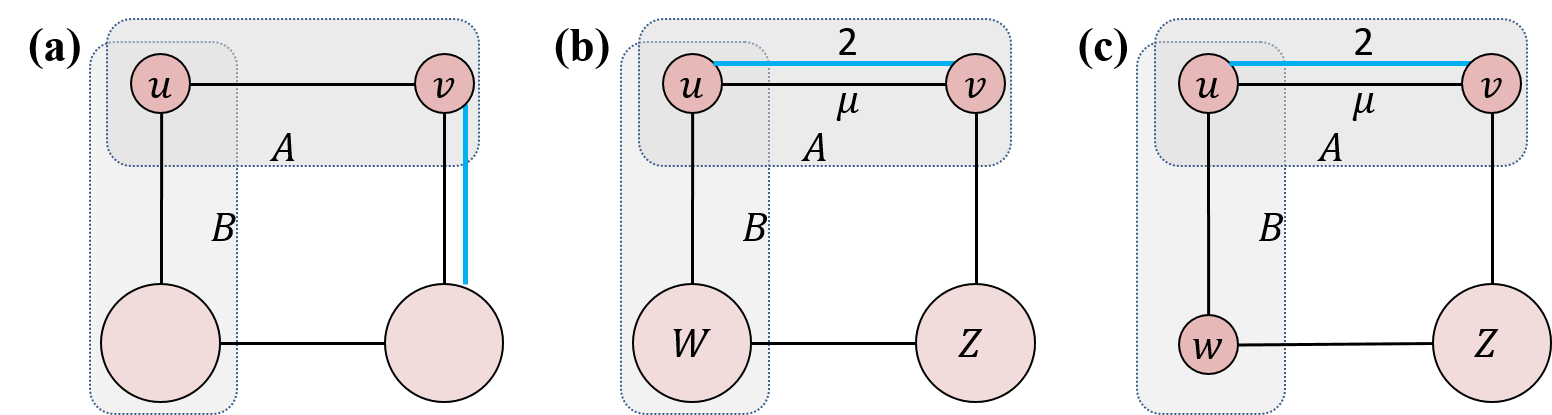}
\caption{Illustration to the proofs of Lemmas \ref{l:mu} and \ref{l:w}.}
\label{f:w} \end{figure}

The next lemma shows that if $d_H(u,v)=1$ then there is an ``alternative'' node $w \in U$ 
such that we can add a ghost edge $uw$ or $vw$. 

\begin{lemma} \label{l:w}
If $d_H(u,v)=1$ then there is $w \in U$ such that one of the following holds:
\begin{enumerate}[(i)]
\item
$\bar{A} \cap B=\{w\}$, $d_H(u,w)=0$, and $d_I(u,w) \ge \mu$.
\item
$\bar{A} \sem B=\{w\}$, $d_H(v,w)=0$, and $d_I(v,w) \ge \mu$.
\end{enumerate}
\end{lemma}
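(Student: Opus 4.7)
The plan is to combine the counting argument in the proof of Lemma~\ref{l:mu}, applied now to the tight set $B$ (and to $\bar B$) in place of $A$, with a ``second round'' of the supermodular/posimodular bookkeeping from the proof of Lemma~\ref{l:AB}, in order to extract a third singleton corner from the crossing pair $A,B$, beyond the two singletons $\{u\}=A\cap B$ and $\{v\}=A\setminus B$ already identified there.

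First I would set $X=\bar A\cap B$ and $Y=\bar A\cap\bar B$; both are non-empty since $A,B$ cross, and then $A=\{u,v\}$, $B=\{u\}\cup X$, $\bar B=\{v\}\cup Y$. From $d_H(u),d_H(v)\le 1$ and $d_H(u,v)=1$ one gets $d_H(u)=d_H(v)=1$, so $uv$ is the only ghost edge touching $u$ or $v$; hence $d_H(u,w)=0$ for $w\ne v$, $d_H(v,w)=0$ for $w\ne u$, and in particular $d_H(u,X)=d_H(v,Y)=0$. This already takes care of the ghost-degree condition that appears in both alternatives (i) and (ii).

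Next I would mimic Lemma~\ref{l:mu} on the set $B$. Its $f$-positivity and $x$-tightness give $d_I(B)+2d_H(B)\le k-1$; substituting $d_I(B)=d_I(u)+d_I(X)-2d_I(u,X)$, $d_H(B)=1+d_H(X)$, and $d_I(u)\ge k-4$ (from $u\in U$ with $d_H(u)=1$) rearranges to
\[
d_I(X)+2d_H(X)\ \le\ 1+2d_I(u,X),
\]
and symmetrically $d_I(Y)+2d_H(Y)\le 1+2d_I(v,Y)$ from $\bar B$. Once we know $X=\{w\}$ with $w\in U$, feeding the hypothesis $d_I(w)+2d_H(w)\ge k-2$ into the first inequality gives $2d_I(u,w)\ge k-3$, hence $d_I(u,w)\ge\mu$; symmetrically if $Y=\{w\}$ with $w\in U$. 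The whole lemma therefore reduces to proving that at least one of $X,Y$ is a singleton in $U$.

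For this last reduction I would argue by contradiction, rerunning the proof of the second claim in Lemma~\ref{l:AB} on the same pair $A,B$ while tracking the $-2$ corrections at both $\{u\}$ and $\{v\}$ at once. If neither $X$ nor $Y$ were a singleton in $U$, then $f$ would coincide with the symmetric supermodular function $g(S)=k-d_I(S)-2d_H(S)$ on each of $A,B,X,Y,A\cup B=V\setminus Y$, and would differ from it only at $\{u\}$ and $\{v\}$, where it is smaller by $2$. Plugging this into the supermodular chain on $\{A,B,A\cap B,A\cup B\}$ and the posimodular chain on $\{A,B,A\setminus B,B\setminus A\}$, together with $x$-tightness of $A,B$ and the standard edge-counting identities that split $x(\delta(A))+x(\delta(B))$ across the two corner pairs, the two $-2$ deficits have nowhere to be absorbed: they force tightness of all four corners together with the vanishing of the ``diagonal'' crossing weights $x(\delta_E(\{v\},X))$ and $x(\delta_E(\{u\},Y))$, i.e., both conditions (a) and (b) of $x$-uncrossability for $A,B$, contradicting the non-$x$-uncrossability of the pair produced by Lemma~\ref{l:AB}. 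The main obstacle is exactly this last step: whereas in Lemma~\ref{l:AB} a single $-2$ deficit sufficed to produce a single singleton corner, here two $-2$ deficits at $\{u\}$ and $\{v\}$ are already in place, and one must carefully verify that they cannot simultaneously be absorbed by slack at $A\cup B$, $X$, $Y$ or by nonzero diagonal weight without manufacturing a third singleton at $X$ or $Y$.
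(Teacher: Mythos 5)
Your opening bookkeeping is correct and mirrors the paper: $d_H(u)=d_H(v)=1$ forces $d_H(u,X)=d_H(v,Y)=0$, and once you know (say) $X=\{w\}$ with $w\in U$, the inequality $d_I(B)+2d_H(B)\le k-1$ together with $d_I(u)+2d_H(u)\ge k-2$ and $d_I(w)+2d_H(w)\ge k-2$ does give $2d_I(u,w)\ge k-3$, hence $d_I(u,w)\ge\mu$. So the lemma indeed reduces to showing that at least one of $X=\bar A\cap B$, $Y=\bar A\cap\bar B$ is a singleton from $U$.

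The gap is in how you propose to close that reduction. Rerunning the uncrossing computation of Lemma~\ref{l:AB} on $A,B$ while charging the two $-2$ corrections at $\{u\}$ and $\{v\}$ does \emph{not} force tightness of the corners nor vanishing of the diagonal crossing weights. Concretely, if neither $X$ nor $Y$ is a singleton from $U$, the supermodular chain gives only $d_I(v,X)+2d_H(v,X)+x(\delta_E(v,X))\le 1$ and symmetrically $d_I(u,Y)+2d_H(u,Y)+x(\delta_E(u,Y))\le 1$; nothing prevents, say, $x(\delta_E(v,X))=1/2>0$ with slack at $A\cap B$, in which case neither (a) nor (b) of $x$-uncrossability is produced. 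Writing $\psi(S,T)=d_I(S,T)+2d_H(S,T)+x(\delta_E(S,T))$, the numbers $\psi(u,v)=\psi(X,Y)=\psi(u,X)=\psi(v,Y)=k/2$, $\psi(v,X)=\psi(u,Y)=0$ satisfy all of your constraints ($\psi(A)=\psi(B)=k$, $\psi(X),\psi(Y),\psi(\{u\}),\psi(\{v\})\ge k-2$, the two corner inequalities above), so purely numerically the two $-2$ deficits \emph{can} be absorbed without a third singleton corner; you flagged exactly this absorption issue as the ``main obstacle'' but did not resolve it, and in fact it cannot be resolved by super/posimodularity alone.

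What the paper uses instead, and what your proposal is missing, is the numerical lower bound $d_I(u,v)\ge\mu$, hence $\psi(u,v)\ge 2d_H(u,v)+\mu=2+\mu$. This comes not from the LP but from the algorithm's ghost-edge rule: since $d_H(u,v)=1$, the ghost edge $uv$ was added at some iteration, at which point $d_I(u,v)\ge\mu$, and $I$ only grows. With this in hand the paper's argument is a direct count: $\psi(B)=k$ and $\psi(u,v)\ge 2+\mu$ give $\psi(X,Y)\le\psi(B)-\psi(u,v)\le k-2-\mu$, while $\psi(X)+\psi(Y)=2\psi(X,Y)+\psi(A)=2\psi(X,Y)+k\le 3k-2(2+\mu)<2k$ since $2(2+\mu)>k$. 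Therefore $\psi(X)<k$ or $\psi(Y)<k$, which (by the definition of $f$) forces that set to be a singleton from $U$. So the paper does not deduce $x$-uncrossability at all here; it replaces the second round of uncrossing by this $\psi$-counting, and the fact $d_I(u,v)\ge\mu$ from Lemma~\ref{l:h} is the indispensable ingredient. Without invoking it your argument does not go through.
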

\begin{proof}
For disjoint sets $S,T$ let 
$\psi(S,T)=d_I(S,T)+2d_H(S,T)+x(\de_E(S,T))$, 
and let $\psi(S)=\psi(S,\bar{S})$ be the ``coverage'' of $S$ by $I$, $H$, and the fractional edges.
Note that
\[
\psi(S) \ge \left \{ \begin{array}{ll}
k-2 \ \  & \mbox{if } S=\{u\} \mbox{ or } \bar{S}=\{u\} \mbox{ for some } u \in U \\
k          & \mbox{otherwise}
\end{array} \right .
\]
In particular, $\psi(S)<k$ implies that $S=\{w\}$ or $\bar{S}=\{w\}$ for some $w \in U$.

Let $W=\bar{A} \cap B$ and $Z=\bar{A} \sem B$, see Fig.~\ref{f:w}(b).
Note that $\psi(A)=\psi(B)=k$ since $A,B$ are $x$-tight and that 
$\psi(u,v) \ge 2d_H(u,v)+ \mu = 2+\mu$. 
This implies (see Fig.~\ref{f:w}(b))
\[
\psi(W,Z) \le \psi(B)-\psi(u,v) \le k-(2+\mu) \ .
\] 
Thus we get 
\[
2k-2(2+\mu) \ge 2\psi(W,Z)=\psi(W)+\psi(Z)- \psi(A) = \psi(W)+\psi(Z)-k
\]
Consequently, $\psi(W)+\psi(Z) \le 3k-2(2+\mu) <2k$, since $2(2+\mu)>k$.  
Thus $\psi(W)<k$ or $\psi(Z)<k$, and w.l.o.g. assume that $\psi(W)<k$. 
This implies that $W=\{w\}$ for some $w \in U$; (see Fig.~\ref{f:w}(c)).
Note that $d_H(u,w)=0$, since $d_H(u,v)=1$ and since $d_H(u) \le 1$ by the assumption of Lemma~\ref{l:main}.
Also, $d_I(B)+2d_H(B) \le k-1$, since $B$ is $f$-positive.
Thus we have
\begin{eqnarray*}
2d_I(u,w) & = & 2[d_I(u,w)+2d_H(u,w)] \\
                 &  =  & [d_I(u)+2d_H(u)]+[d_I(w)+2d_H(w)]-[d_I(B)+2d_H(B)] \\
                 & \ge & (k-2)+(k-2)-(k-1) =k-3 \ ,
\end{eqnarray*}
concluding the proof.
\end{proof}

To show a polynomial time implementation it is sufficient to prove the following lemma, 
that was essentially proved in \cite{HKZ}; we provide a proof for completeness of exposition.

\begin{lemma}[\cite{HKZ}] \label{l:impl}
At any step of the algorithm, an extreme point solution $x$ to LP  (\ref{LP}) with $f$ in (\ref{f}), as well
as the $x$-cores can be computed in polynomial time. 
\end{lemma}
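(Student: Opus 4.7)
The plan is (a) to design a polynomial-time separation oracle for LP~(\ref{LP}), which combined with the ellipsoid method yields an optimal extreme point, and (b) to identify the $x$-cores via polynomial-time minimum-cut computations in an auxiliary graph.

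For (a), I would build the auxiliary capacitated graph $G'$ on the current node set, assigning capacity $x_e$ to each $e\in E$, capacity $1$ to each edge of $I$, and capacity $2$ to each edge of $H$. By the definition of $f$ in~(\ref{f}), the constraint $x(\delta_E(S))\ge f(S)$ is equivalent to the $G'$-capacity of $S$ being at least $k$, except when $S=\{u\}$ or $\bar S=\{u\}$ for some $u\in U$, in which case the threshold drops to $k-2$. The at most $|U|$ exceptional constraints are verified directly. The remaining inequalities are checked by searching for a cut of $G'$-capacity less than $k$ among all cuts that are \emph{not} singleton-of-$U$ cuts; this can be done in polynomial time with any max-flow subroutine, e.g.\ by computing a global minimum cut of $G'$ and, should the minimizer happen to be a singleton of $U$ (which is permitted to have capacity as low as $k-2$), re-running the min-cut computation with that singleton's capacity increased to $+\infty$. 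A violated inequality supplies the separating hyperplane, so feeding this oracle into the ellipsoid method produces an optimal LP value in polynomial time. Extracting an extreme-point optimum is then standard, e.g.\ by lexicographic perturbation of $c$ or by computing a basic feasible solution from a maximal linearly independent subsystem of tight constraints.

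For (b), given an extreme point $x$, the $x$-tight sets are precisely the minimum cuts of $G'$ having the correct target capacity ($k$ in general, $k-2$ for singletons of $U$). Minimum cuts of a capacitated graph admit a cactus representation of polynomial size that is computable in polynomial time, hence all $x$-tight sets can be enumerated; among those that are also $f$-positive, the inclusion-minimal ones -- the $x$-cores -- can be selected directly. An equivalent and perhaps cleaner route is to compute, for each node $v$ in turn, the minimal $x$-tight $f$-positive set containing $v$ via Gomory--Hu-style min-cut queries, and then keep the inclusion-minimal representatives of the resulting family.

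The step I expect to require most care is not any single piece of machinery -- each of the ingredients (max-flow, ellipsoid, cactus representation, extreme-point extraction) is textbook -- but rather the bookkeeping around the singleton-of-$U$ exception, which must be handled consistently both in the separation oracle and in the core-finding routine. Once that is organized, everything reduces to standard polynomial-time applications of max-flow and the cactus structure of minimum cuts.
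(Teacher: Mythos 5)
Your overall framework matches the paper's: the ellipsoid method with a max-flow separation oracle over the auxiliary capacitated graph $G'$, followed by min-cut computations to locate the cores. The gaps are in how you treat the singleton-of-$U$ exception in both halves. For the separation oracle, ``increasing that singleton's capacity to $+\infty$'' is not a well-defined graph operation that achieves what you want: if you mean raising the capacities of the edges incident to $u$, this over-restricts, since it forbids \emph{every} cut separating $u$ from any of its neighbours, not merely the cut $\{u\}$; moreover there may be several singletons of $U$ of small capacity to exclude, and re-running once does not resolve that. The paper instead forces $|S|\ge 2$ by contraction: when $s\in U$ it contracts each candidate node into $s$ and computes a min $st$-cut in the contracted graph (and contracts a second node into $t$ when $t\in U$ as well), which correctly searches over exactly the non-exceptional cuts.

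For finding the $x$-cores, the cactus representation is the wrong tool: it encodes the \emph{global} minimum cuts, whereas the $x$-tight sets of interest have $w$-capacity exactly $k$, which need not be the global minimum --- under the lemma's hypotheses a singleton $\{u\}$ with $u\in U$ can have $w(\delta(u))$ as low as $k-2$, so the global min-cut value can be strictly below $k$ and the cactus then enumerates cuts of the wrong capacity. The paper's routine instead iterates over ordered pairs $(s,t)$ with $st\in E$ and finds, when it exists, the inclusion-minimal $st$-cut of $w$-capacity exactly $k$ (readable off a max-flow residual graph), then takes the inclusion-minimal sets among the resulting $C(s,t)$; this works because, under the lemma's hypotheses, no singleton of $U$ is $f$-positive, so every $x$-core is an inclusion-minimal set of $w$-capacity $k$ with $\delta_E(C)\ne\emptyset$. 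Your ``Gomory--Hu-style'' alternative is closer in spirit, but it still needs this exact-capacity targeting rather than a minimum-cut computation, and a justification of that characterization of cores.
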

\begin{proof}
An extreme point solution can be found using the Ellipsoid Algorithm.
For that, we need the show an existence of a polynomial time separation oracle: 
given $x$, determine whether $x$ is a feasible LP solution or find a violated inequality.
Checking the inequalities $0 \le x_e \leq 1$ is trivial.
For cut inequalities, assign capacities $w_e$ to the edges in $E \cup I \cup H$ as follows: 
$w_e=x_e$ if $e \in E$, $w_e=1$ if $e \in I$, and $w_e=2$ if $e \in H$. 
The inequality of a cut $S$ is violated if and only if 
$w(\de(S)) < k$ and none of $S,\bar{S}$ is a singleton in $U$. 
Thus to find an $st$-cut with violated inequality do the following.
If $s,t \notin U$ then we find a minimum $st$-cut $S$;
if $w(\de(S)) < k$ then we found a violated inequality, else no $st$-cut with violated inequality exists.
If $s \in U$ and $t \notin U$, then for every $u \in V \sem \{s,t\}$ we contract $u$ into $s$ 
and find a minimum $st$-cut $S_v$; 
if $w(\de(S_v)) < k$ for some $v$, then we found a violated inequality, else no $st$-cut with violated inequality exists.
Similarly, If $s,t \in U$ then for every pair $u,v \in V \sem \{s,t\}$ we contract $u$ into $s$ and $v$ into $t$
and find a minimum $st$-cut $S_{u,v}$; 
if $w(\de(S_{u,v})) < k$ for some $u,v$ then we found a violated inequality, else no $st$-cut with violated inequality exists.

We show how to find all $x$-cores. Assume w.l.o.g. that $0 < x_e <1$ for all $e \in E$.
One can see that a cut $C$ is an $x$-core if and only if 
$C$ is an inclusion minimal set that has $w$-capacity $k$ and $\de_E(C) \ne \empt$. 
For every ordered pair $(s,t)$ with $st \in E$ we check if there exists an $st$-cut $C$ (so $s \in C$ and $t \notin C$)
of capacity exactly $k$, and if so, then we find the inclusion minimal such cut $C(s,t)$.
The $x$-cores are the inclusion minimal sets among the sets $C(s,t)$.
\end{proof}

This concludes the proof of Lemma~\ref{l:main} and thus also of Theorem~\ref{t:main}.

%%%%%%%%%%%%%%%%%%%%%%%%%%%%%%%
\section{Proof of Theorem~\ref{t:main'}} \label{s:main'}
%%%%%%%%%%%%%%%%%%%%%%%%%%%%%%%

In the algorithm of Theorem~\ref{t:main'} we use the following function $f$:
\begin{equation} \label{f'}
f(S) = \left \{ \begin{array}{ll}
k-d_I(S)-d_H(S)-1 \ \ & \mbox{if } S=\{u\} \mbox{ or } \bar{S}=\{u\} \mbox{ for some } u \in U \\
k-d_I(S)-d_H(S)         & \mbox{otherwise}
\end{array} \right .
\end{equation}

The main difference between Algorithm~\ref{alg:main} and the following Algorithm~\ref{alg:main'} are:
\begin{enumerate}
\item
We use a different function $f$ -- the one defined in (\ref{f'}). % (line 3 in the algorithm).
\item
We round to $1$ edges $e$ with $x_e \ge 2/3$. % (lines 5,6 in the algorithm).
\item
Adding a ghost edge $uv$ requires  
$\mu=\lceil (k-1)/2 \rceil$ integral $uv$-edges. % (line 9 in the algorithm).
\end{enumerate}

\begin{algorithm}[H]
\caption{A bicriteria $(3/2,k-2)$-approximation} \label{alg:main'}
{\bf initialization}: $I \gets \empt$, $H \gets \empt$, $U \gets \empt$, $\mu \gets \lceil (k-1)/2 \rceil$ \\
\While{$E \ne \empt$}
{
compute an extreme point solution $x$ to LP  (\ref{LP}) with $f$ in (\ref{f'}) \\
remove from $E$ every edge $e$ with $x_e=0$ \\
\If{\em there is $e \in E$ with $x_e \ge 2/3$}
{move from $E$ to $I$ every edge $e$ with $x_e \ge 2/3$ and goto line 2} 
\ElseIf{\em there is an $x$-core $C$ with $d_I(C)=k-1$ and $d_E(C) \in \{2,3\}$}
{contract $C$ into a single node $v_C$ and update $U \gets (U \sem C) \cup \{v_C\}$} 
\ElseIf{\em there are $u,v \in U$ with $d_I(u,v) \ge \mu$ and $d_H(u,v) =0$}  
{add to $H$ a ghost-edge $uv$ and update $U \gets U \sem \{u,v\}$}
}
\Return{$J=I$}
\end{algorithm}

\medskip \medskip

Note that here we combine iterative relaxation with iterative rounding.
Since in each iteration the cut constraints are only relaxed, 
and only edges $e$ with $x_e \ge 2/3$ are rounded,
the solution cost is at most $3/2$ times the initial LP-value.
It remains to prove a polynomial running time and the connectivity guarantee $k-2$.
To show that the algorithm terminates and runs in polynomial time 
it is sufficient to prove the following counterpart of Lemma~\ref{l:main}. 

\begin{lemma} \label{l:main'}
Let $f$ be defined by (\ref{f'}) and suppose that
$d_H(u) \le 1$ and $d_I(u)+d_H(u) \ge k-1$ for all $u \in U$.
Let $x$ be an extreme point of the polytope 
$P=\{x \in [0,1]^E:x(\de_E(S)) \ge f(S)\}$ % of the LP in (\ref{LP})
with $0<x_e<2/3$ for all $e \in E$, where $E \ne \empt$. 
Then at least one of the following holds.
\begin{enumerate}[(i)]
\item
There is an $x$-core $C$ with $d_E(C) \in \{2,3\}$ and $f(C)=1$; moreover, no edge in $E$ has both ends in $C$.
\item
There are $u,v \in U$ with $d_H(u,v)=0$ and $d_I(u,v) \ge \mu$, where $\mu=\lceil (k-1)/2 \rceil$.
\end{enumerate}
Furthermore, such $u,v$ or such $C$ can be found in polynomial time.
\end{lemma}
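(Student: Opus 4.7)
The plan is to track the proof of Lemma~\ref{l:main} step by step, making minor arithmetic adjustments to accommodate the new function $f$ in (\ref{f'}), the tighter rounding threshold $x_e < 2/3$, and the enlarged $\mu = \lceil (k-1)/2 \rceil$. As before, I would split on whether an $x$-defining set family can be chosen laminar.

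If a laminar $x$-defining family $\LL$ exists, Lemma~\ref{l:L} still applies: its proof is a Jain-style token argument that depends only on the support structure of $\LL$, not on the specific form of $f$, so it produces a core $C$ with $d_E(C) \le 3$ such that no edge of $E$ has both ends in $C$. I would sharpen the conclusion to $d_E(C) \in \{2,3\}$ and $f(C) = 1$ as follows: $f(C)$ is a positive integer and equals $x(\de_E(C))$, which is strictly below $3 \cdot 2/3 = 2$, forcing $f(C) = 1$, while $x_e < 1$ forces $d_E(C) \ge 2$.

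If no laminar $x$-defining family exists, the uncrossing proof of Lemma~\ref{l:AB} carries over unchanged because the relevant symmetric supermodular function $g(S) = k - d_I(S) - d_H(S)$ behaves structurally the same as the original $k - d_I(S) - 2 d_H(S)$. This produces crossing $f$-positive $x$-tight sets $A,B$ with $A \cap B = \{u\}$ and $A \sem B = \{v\}$ for some $u,v \in U$. I would then split on $d_H(u,v)$. When $d_H(u,v) = 0$: $A$ is $f$-positive and not a $U$-singleton, so $d_I(A) + d_H(A) \le k - 1$; combined with the lemma's hypothesis $d_I(u) + d_H(u), d_I(v) + d_H(v) \ge k - 1$ via the identity $d_I(A) + d_H(A) = [d_I(u) + d_H(u)] + [d_I(v) + d_H(v)] - 2 d_I(u,v)$, this yields $d_I(u,v) \ge (k-1)/2$, hence $d_I(u,v) \ge \mu$. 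When $d_H(u,v) = 1$: mimic Lemma~\ref{l:w} with the coverage $\psi(S,T) = d_I(S,T) + d_H(S,T) + x(\de_E(S,T))$, which satisfies $\psi(S) \ge k$ in the generic case and $\psi(S) \ge k - 1$ when $S$ or $\bar S$ is a $U$-singleton. Using $\psi(A) = \psi(B) = k$ and $\psi(u,v) \ge \mu + 1$, derive $\psi(W,Z) \le k - \mu - 1$ and then $\psi(W) + \psi(Z) = \psi(A) + 2\psi(W,Z) \le 3k - 2 - 2\mu < 2k$, the last inequality using $2\mu \ge k-1$. Thus one of $W,Z$ equals $\{w\}$ for some $w \in U$, and an analogous accounting with $d_I(B) + d_H(B) \le k - 1$ together with $d_H(u,w) = 0$ (which follows from $d_H(u) \le 1$ and $d_H(u,v) = 1$) yields $d_I(u,w) \ge \mu$.

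The polynomial-time implementation of the separation oracle and the core enumeration follows Lemma~\ref{l:impl} essentially verbatim, with the only change being that ghost edges carry weight $1$ instead of $2$. The main obstacle I anticipate is arithmetic bookkeeping: the crucial inequality $2(1 + \mu) > k$ in the $d_H(u,v)=1$ case has only a single-unit margin, so it is precisely the new $-1$ singleton correction of $f$ (rather than $-2$), combined with the larger $\mu$ and the stronger hypothesis $d_I(u) + d_H(u) \ge k-1$, that makes every step of the analog proof close with equality-minus-one and not fail by a constant.
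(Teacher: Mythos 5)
Your proposal is correct and tracks the paper's own proof of Lemma~\ref{l:main'} essentially step by step: the same split on existence of a laminar $x$-defining family, the same invocation of Lemma~\ref{l:L} with the sharpening $f(C)=1$ via the $x_e<2/3$ bound, the same uncrossing argument to get $A,B$ with $A\cap B=\{u\}$, $A\sem B=\{v\}$, the same two-case analysis on $d_H(u,v)$ with the $\psi$-coverage bookkeeping, and the same appeal to Lemma~\ref{l:impl} for polynomial time. Your remarks about the symmetric supermodularity of $g(S)=k-d_I(S)-d_H(S)$, the ghost-edge weight being $1$ in the separation oracle, and the single-unit margin in $2(1+\mu)>k$ are all consistent with (and slightly more explicit than) the paper's exposition.
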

\begin{proof}
By Lemma~\ref{l:L}, if there exists an $x$-defining laminar family, then 
there is an $x$-core $C$ such that $d_E(C) \in \{2,3\}$, $f(C) \in \{1,2\}$, and no edge in $E$ has both ends in $C$.
But $f(C) = 2$ is not possible, as otherwise there is $e \in \de_E(C)$ with $x_e \ge 2/3$.
Hence in this case (i) holds.

\medskip

We now show that if no laminar $x$-defining family exists then (ii) holds.
Similarly to Lemma~\ref{l:AB} we conclude that there is a pair $A,B$ of $x$-tight $f$-positive 
sets that cross such that 
$A \cap B=\{u\}$ and $A \sem B=\{v\}$ for some $u,v \in U$.

We claim that if $d_H(u,v)=0$ then $d_I(u,v) \ge \mu$.
By the assumption of the lemma, $d_I(u)+d_H(u) \ge k-1$ and $d_I(v)+d_H(v) \ge k-1$.
Since $A$ is $f$-positive, $d_I(A)+d_H(A) \le k-1$. Thus we get (see Fig.~\ref{f:w}(a))
\[
k-1 \ge d_I(A)+d_H(A)=[d_I(u)+d_H(u)]+[d_I(v)+d_H(v)]-2d_I(u,v) \ge 2(k-1)-2d_I(u,v) \ .
\]
Consequently, $2d_I(u,v) \ge k-1$, as claimed.

\medskip

We prove that if $d_H(u,v)=1$ then there is $w \in U$ such that one of the following holds:
\begin{itemize}
\item
$\bar{A} \cap B=\{w\}$, $d_H(u,w)=0$, and $d_I(u,w) \ge \mu$.
\item
$\bar{A} \sem B=\{w\}$, $d_H(v,w)=0$, and $d_I(v,w) \ge \mu$.
\end{itemize}

% For disjoint sets $S,T$ 
Let $\psi(S,T)=d_I(S,T)+d_H(S,T)+x(\de_E(S,T))$
and $\psi(S)=\psi(S,\bar{S})$.
Note that
\[
\psi(S) \ge \left \{ \begin{array}{ll}
k-1 \ \  & \mbox{if } S=\{u\} \mbox{ or } \bar{S}=\{u\} \mbox{ for some } u \in U \\
k          & \mbox{otherwise}
\end{array} \right .
\]
In particular, $\psi(S)<k$ implies that $S=\{w\}$ or $\bar{S}=\{w\}$ for some $w \in U$.

Let $W=\bar{A} \cap B$ and $Z=\bar{A} \sem B$. 
Note that $\psi(A)=\psi(B)=k$ since $A,B$ are $x$-tight and that 
$\psi(u,v) \ge d_H(u,v)+ \mu = 1+\mu$. 
This implies 
$\psi(W,Z) \le \psi(B)-\psi(u,v) \le k-(1+\mu)$, 
see Fig.~\ref{f:w}(b).
Thus we get 
\[
2k-2(1+\mu) \ge 2\psi(W,Z)=\psi(W)+\psi(Z)- \psi(A) = \psi(W)+\psi(Z)-k
\]
Consequently, $\psi(W)+\psi(Z) \le 3k-2(1+\mu) <2k$, since $2(1+\mu)>k$.  
Thus $\psi(W)<k$ or $\psi(Z)<k$, and w.l.o.g. $\psi(W)<k$. 
This implies that $W=\{w\}$ for some $w \in U$, see Fig.~\ref{f:w}(c).
Note that $d_H(u,w)=0$, since $d_H(u,v)=1$ and since $d_H(u) \le 1$ by the assumption of the lemma.
Also, $d_I(B)+d_H(B) \le k-1$, since $B$ is $f$-positive.
Thus we have
\begin{eqnarray*}
2d_I(u,w) & = & 2[d_I(u,w)+d_H(u,w)] \\
                 &  =  & [d_I(u)+d_H(u)]+[d_I(w)+d_H(w)]-[d_I(B)+d_H(B)] \\
                 & \ge & (k-1)+(k-1)-(k-1) =k-1 \ .
\end{eqnarray*}

The polynomial time implementation details are identical to those in Lemma~\ref{l:impl}.
\end{proof}

\medskip

We now prove the connectivity guarantee. 
As before, let $\RR$ be the family of subsets of $V_0$ that correspond to the contracted sets during the algorithm.
Similarly to the previous case, we have the following sequence of lemmas. 

\begin{lemma}\label{l:C'}
When a core $C$ % at step $7$ of the algorithm 
is found in the algorithm but not yet contracted, 
$d_H(C) \le 1$, $x(\de_E(C)) =1$, and $d_I(C) = k-d_H(C)-x(\de_E(C))=k-d_H(C)-1$. 
Consequently, $d_I(C)= k-1$ if $d_H(C)=0$ and $d_I(C)=k-2$ if $d_H(C)=1$.
\end{lemma}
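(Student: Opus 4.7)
The plan is to mirror the proof of Lemma~\ref{l:C}, adapting each step to the modified residual function $f$ of (\ref{f'}) and the rounding threshold $2/3$ used in Algorithm~\ref{alg:main'}. The first observation to record is that, under the hypothesis $d_I(u)+d_H(u)\ge k-1$ for every $u\in U$, any set $S$ with $S=\{u\}$ or $\bar S=\{u\}$ for some $u\in U$ satisfies $f(S)=k-d_I(u)-d_H(u)-1\le 0$, so is not $f$-positive. In particular, no core is of this form, and hence $f(C)=k-d_I(C)-d_H(C)$ for every core $C$; this neatly sidesteps the two-case definition of~$f$ and is the main way the new, strengthened hypothesis on singletons in $U$ is used.

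To bound $d_H(C)\le 1$, I would suppose $d_H(C)\ge 2$ and invoke Lemma~\ref{l:h}: the ghost edges crossing $C$ contribute pairwise disjoint edge sets $I(h)\subseteq \de_I(C)$, each of size at least $\mu=\lceil(k-1)/2\rceil$. Hence $d_I(C)\ge 2\mu\ge k-1$, and combined with $d_H(C)\ge 2$ this gives $d_I(C)+d_H(C)\ge k+1$, so $f(C)\le -1$, contradicting $f$-positivity of the core.

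For the remaining two claims, I would use Lemma~\ref{l:main'}(i) to get $d_E(C)\in\{2,3\}$, and recall that since line~7 is reached only after line~5 ceases to fire, every $e\in E$ satisfies $x_e<2/3$. Hence $x(\de_E(C))<3\cdot(2/3)=2$, while tightness of $C$ makes $x(\de_E(C))=f(C)$ a positive integer, forcing $x(\de_E(C))=1$. Combined with $f(C)=k-d_I(C)-d_H(C)=1$ from the opening observation, this gives $d_I(C)=k-d_H(C)-1$, and the two consequences follow from $d_H(C)\in\{0,1\}$. The only delicate point, and the place I would be most careful, is that the threshold $2/3$ (strictly smaller than $1$) must still force $x(\de_E(C))<2$ in combination with $d_E(C)\le 3$; this is precisely why the threshold was chosen as $2/3$ rather than anything larger, and it is the design choice that keeps the whole $(3/2,k-2)$ analysis going.
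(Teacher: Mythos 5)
Your proof is correct and follows essentially the same route as the paper's: rule out $d_H(C)\ge 2$ via Lemma~\ref{l:h} and $f$-positivity, use $x_e<2/3$ together with $d_E(C)\le 3$ to pin $x(\de_E(C))=1$ by integrality and tightness, and then read off $d_I(C)$ from $f(C)=1$. Your opening observation that no core can be a singleton from $U$ (so the second branch of (\ref{f'}) applies) is a useful explicit step that the paper leaves implicit, and your bookkeeping $d_I(C)+d_H(C)\ge 2\mu+2>k$ is cleaner than the paper's $d_I(C)+2d_H(C)\ge 2\mu+2\ge k$, which carries over the coefficient from (\ref{f}) rather than (\ref{f'}).
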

\begin{proof}
If $d_H(C) \ge 2$ then $d_I(C)+2d_H(C) \ge 2\mu +2 \ge k$, 
contradicting that $C$ is a core.  
Since $d_E(C) \in \{2,3\}$ and $x(\de_E(C))$ is a positive integer, we must have 
$x(\de_E(C))=1$, as otherwise there is $e \in \de_E(C)$ with $x_e \ge 2/3$. 
\end{proof}

\begin{lemma} \label{l:dH'}
During the algorithm $d_H(u) \le 1$ and $d_I(u)+d_H(u) = k-1$ 
holds for all~$u \in U$. Furthermore, $d_H(v) \le 2$ for all $v \in V \sem U$.
Thus the algorithm terminates after $O(m)$ iterations.
\end{lemma}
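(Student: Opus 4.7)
The plan is to mirror the proof of Lemma~\ref{l:dH}, replacing Lemma~\ref{l:C} with Lemma~\ref{l:C'} and replacing the $-2$-relaxation with the $-1$-relaxation encoded in $f$ from (\ref{f'}). First I would establish the invariants at the moment a node $u=v_C$ first joins $U$, i.e., when an $x$-core $C$ is contracted. Lemma~\ref{l:C'} gives $d_H(v_C)=d_H(C)\le 1$ and
\[
d_I(v_C)+d_H(v_C)=d_I(C)+d_H(C)=(k-d_H(C)-1)+d_H(C)=k-1,
\]
which is exactly the claimed equality at entry. The two invariants therefore hold the instant $v_C$ is added to $U$.

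Next I would show that once $u$ has satisfied $d_I(u)+d_H(u)\ge k-1$, the singleton $\{u\}$ cannot become an $x$-core again. Indeed, the definition of $f$ in (\ref{f'}) gives
\[
f(\{u\})=k-d_I(u)-d_H(u)-1\le 0,
\]
so $\{u\}$ is not $f$-positive, and the quantity $f(\{u\})$ is monotonically non-increasing along the algorithm (adding integral or ghost edges only decreases it). Thus $u$ can never re-enter $U$ after leaving. While $u\in U$, the only way $d_H(u)$ can grow is through the ghost-edge step, which by construction immediately removes $u$ from $U$; so $d_H(u)\le 1$ is preserved throughout $u$'s membership in $U$. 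A node $v$ that has left $U$ did so with $d_H(v)\in\{1,2\}$, depending on whether $d_H(v)=0$ or $d_H(v)=1$ at the moment of its last entry, and since $v$ cannot re-enter $U$, the bound $d_H(v)\le 2$ holds for all $v\in V\setminus U$ thereafter.

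Finally I would bound the number of iterations. Each iteration performs exactly one of: (a) deleting an edge with $x_e=0$, (b) rounding and moving an edge with $x_e\ge 2/3$ to $I$, (c) contracting a core, or (d) adding a ghost edge. Events of type (a) and (b) each decrease $|E|$, so together they contribute at most $|E_0|=m$ iterations. The family $\RR$ of contracted sets is laminar on $V_0$, hence $|\RR|=O(n)$, bounding type (c). Ghost edges form a partial matching on $\RR$ (since adding $uv\in H$ removes both endpoints from $U$), so $|H|\le|\RR|/2=O(n)$, bounding type (d). Altogether the algorithm terminates after $O(m)$ iterations.

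The one delicate point I would flag in a full write-up is that after $u$ has entered $U$, a subsequent iteration may round an integral edge incident to $u$ and increase $d_I(u)$. This turns the equality $d_I(u)+d_H(u)=k-1$ into the weaker $\ge k-1$, but the only consequence actually used downstream is $f(\{u\})\le 0$, which remains valid; the equality is truly an ``at entry'' statement and should be read as the invariant $d_I(u)+d_H(u)\ge k-1$ throughout $u$'s membership in $U$.
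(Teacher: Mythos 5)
The proposal is correct and mirrors the paper's own proof, which derives the entry-time identities from Lemma~\ref{l:C'}, argues via monotonicity of $f(\{u\})$ that $\{u\}$ can never re-enter $U$, and delegates the iteration count to the argument of Lemma~\ref{l:m}. Your flagged concern about the equality $d_I(u)+d_H(u)=k-1$ is well-spotted and correct: the identity is established only at the moment $u$ enters $U$, and a later iteration may round an edge of $\de_E(u)$ into $I$ (since $f(\{u\})=0$ imposes no constraint, other cuts can drive an edge incident to $u$ above the $2/3$ threshold), which degrades the equality to $d_I(u)+d_H(u)\ge k-1$; that inequality is exactly the hypothesis of Lemma~\ref{l:main'} and also matches the inequality form of the Algorithm~1 counterpart, Lemma~\ref{l:dH}, so the paper's stated equality should indeed be read as ``$\ge k-1$'' throughout $u$'s membership in $U$.
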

\begin{proof}
By Lemma~\ref{l:C'}, when $u$ enters $U$ we have 
$d_H(u) \le 1$ and  
$d_I(u) = k-d_H(u)-1$.
This implies 
$f(\{u\})=k-d_I(u)-d_H(u)-1 = 0$.
The addition of a ghost edge incident to $u$ excludes $u$ from $U$ but does not increase $f(\{u\})$, 
hence $\{u\}$ will never enter $U$ again. 
If we had $d_H(u)=1$ when $u$ entered $U$, then $u$ will leave $U$ with $d_H(u)=2$ 
and will never enter $U$ again. Thus $d_H(v) \le 2$ for all $v \in V \sem U$.
The proof that the algorithm terminates after $O(m)$ iterations is identical to the proof of Lemma~\ref{l:m}.
\end{proof}

\begin{lemma} \label{l:cut'}
When a core $C$ % at step $7$ of the algorithm 
is found in the algorithm but not yet contracted, 
the following holds for any proper subset $S$ of $C$.
\begin{enumerate}[(i)]
\item
$d_I(S) \ge d_H(S) \cdot \mu$ and $d_I(S) \ge k-d_H(S)-1$.
\item
$d_I(S,C \sem S) \ge \mu$.
\end{enumerate}
\end{lemma}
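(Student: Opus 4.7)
My plan is to mirror the proof of Lemma~\ref{l:cut} almost verbatim, but with the parameters adjusted for the new regime: the function $f$ in (\ref{f'}) relaxes by $d_H$ (not $2d_H$) and by $1$ (not $2$) on singletons from $U$; the core has $x(\de_E(C))=1$ (not $\in\{1,2\}$) and hence $d_I(C)=k-d_H(C)-1$ by Lemma~\ref{l:C'}; and the threshold is $\mu=\lceil(k-1)/2\rceil$.

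For part~(i), the bound $d_I(S)\ge d_H(S)\cdot\mu$ will come from Lemma~\ref{l:h}, whose statement and proof carry over unchanged (the no-parallel-ghost-edges invariant, hence the disjointness of the bundles $I(h)$, still holds in Algorithm~\ref{alg:main'}), combined with the observation that every integral edge in a bundle $I(h)$ for $h\in\de_H(S)$ must lie in $\de_I(S)$. For the second inequality, I split on whether $S$ is a singleton node from $U$: in that case Lemma~\ref{l:dH'} gives $d_I(u)=k-d_H(u)-1$ directly. Otherwise $f(S)=k-d_I(S)-d_H(S)$, so feasibility of $x$ gives $d_I(S)+d_H(S)+x(\de_E(S))\ge k$; since by Lemma~\ref{l:main'}(i) no edge of $E$ has both ends in $C$, we have $\de_E(S)\subseteq\de_E(C)$, and Lemma~\ref{l:C'} gives $x(\de_E(S))\le x(\de_E(C))=1$, which rearranges to $d_I(S)\ge k-d_H(S)-1$.

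For part~(ii), I again split on $d_H(S,C\setminus S)$. If there is a ghost edge $h$ between a node of $S$ and a node of $C\setminus S$, then the bundle $I(h)$ consists of at least $\mu$ integral edges crossing from $S$ to $C\setminus S$, so $d_I(S,C\setminus S)\ge\mu$. Otherwise $d_H(S)+d_H(C\setminus S)=d_H(C)$, and the usual inclusion--exclusion identity
\[
2d_I(S,C\setminus S)=d_I(S)+d_I(C\setminus S)-d_I(C)
\]
combined with the two lower bounds of part~(i) applied to $S$ and $C\setminus S$, and the upper bound $d_I(C)\le k-d_H(C)-1$ from Lemma~\ref{l:C'}, yields
\[
2d_I(S,C\setminus S)\ge [k-d_H(S)-1]+[k-d_H(C\setminus S)-1]-[k-d_H(C)-1]=k-1,
\]
so $d_I(S,C\setminus S)\ge(k-1)/2$, and integrality pushes this up to $\lceil(k-1)/2\rceil=\mu$.

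There is no real obstacle here; the only subtlety worth double-checking is the integrality rounding at the end of (ii) when $k$ is even (where $(k-1)/2$ is not itself an integer but $\lceil(k-1)/2\rceil=k/2=\mu$ is still achieved because $d_I(S,C\setminus S)\in\mathbb{Z}$). Everything else is a direct re-run of Lemma~\ref{l:cut}'s argument with the constants $2,2,1$ replaced by $1,1,1$ in the three crucial bounds.
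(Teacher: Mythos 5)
Your proof is correct and follows essentially the same route as the paper's: in (i) you use Lemma~\ref{l:h} for the first bound and split on whether $S$ is a singleton of $U$ for the second, and in (ii) you split on $d_H(S,C\setminus S)$ and apply the inclusion--exclusion identity together with (i) and Lemma~\ref{l:C'}. The only cosmetic differences are that you cite Lemma~\ref{l:dH'} rather than Lemma~\ref{l:C'} for the singleton case (both give the needed identity $d_I(u)=k-d_H(u)-1$), and you spell out the integrality rounding step at the end of (ii); your displayed computation in (ii) also avoids a small typo present in the paper's version of that step.
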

\begin{proof}
We prove (i). $d_I(S) \ge d_H(S) \cdot \mu$ by Lemma~\ref{l:h}.
If $S=\{u\}$ for some $u \in U$ then $d_I(u) = k-d_H(u)-1$ by Lemma~\ref{l:C'}.
Otherwise, the constraint of $S$ is relaxed by ghost edges only, 
and since $\de_E(S) \subs \de_E(C)$ (by Lemma~\ref{l:main'}(i)) 
and $x(\de_E(C)) \le 1$ (by Lemma~\ref{l:C'}), we get 
$d_I(S) \ge k-d_H(S) -x(\de_E(S)) = k-d_H(S)-1$.

We prove (ii). 
If $d_H(S,C \sem S) \ge 1$ then $d_I(S,C \sem S) \ge \mu$ 
since $d_I(u,v) \ge \mu$ when the ghost edge $uv$ is added. 
Otherwise, $d_H(S)+d_H(C \sem S)=d_H(C) \le 1$ and since $d_J(C) \le k-d_H(C)-1$ (by Lemma~\ref{l:C'}) 
and by part (i) we get 
\begin{eqnarray*}
2d_I(S,C \sem S) &   =   & d_I(S)+d_I(C \sem S)-d_I(C) \\
                              & \ge & [k-d_H(S)-1] + [k-d_H(C \sem S)-1]-[k-2d_H(C)-1] \\
															&   =  & k-1-[d_H(S)+d_H(C \sem S)-d_H(C)] = k-1 \ ,
\end{eqnarray*}
concluding the proof.
\end{proof}

\begin{corollary} \label{c:bound'}
Let $S$ be a cut such that $\RR \cup \{S\}$ is laminar.
Then $d_J(S) \ge k-d_H(S)-1$ and $d_J(S) \ge d_H(S) \cdot \mu$. 
Furthermore, if $R_S=V$ and $S \notin R$ then $d_J(S) \ge k-d_H(S)$.
Consequently, $d_J(S) \ge \max\{k-d_H(S)-1, d_H(S) \cdot \mu\} \ge k-2$.
\end{corollary}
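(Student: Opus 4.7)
The plan is to mirror the proof of Corollary~\ref{c:bound}, substituting the primed versions of the lemmas from the current section (Lemmas~\ref{l:h}, \ref{l:C'}, and \ref{l:cut'}) for the unprimed ones. The two bounds $d_J(S) \ge d_H(S) \cdot \mu$ and $d_J(S) \ge k - d_H(S) - 1$ are handled separately, and the final bound $k-2$ follows from a short case analysis on $d_H(S)$.

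For the bound $d_J(S) \ge d_H(S) \cdot \mu$, I would note that each ghost edge $h \in \de_H(S)$ carries its associated set $I(h) \subs \de_I(S)$ of at least $\mu$ integral edges, and these sets are pairwise disjoint by Lemma~\ref{l:h}. Summing over $h \in \de_H(S)$ gives $d_J(S) \ge d_H(S) \cdot \mu$.

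For the bound $d_J(S) \ge k - d_H(S) - 1$, I would split into three cases according to the relationship between $S$ and $\RR$. First, if $S \in \RR$, then at the moment $S$ was chosen as a core the function value and Lemma~\ref{l:C'} give $d_I(S) = k - d_H(S) - 1$; later iterations can only grow $d_I(S)$, so the bound is preserved for $d_J(S)$. Second, if $S \notin \RR$ but $S$ is contained in some set of $\RR$, let $C$ be the inclusion-minimal such set; since $\RR \cup \{S\}$ is laminar, $S$ is a proper subset of $C$ and Lemma~\ref{l:cut'}(i) applies at the moment $C$ is the current core, giving $d_I(S) \ge k - d_H(S) - 1$ at that instant (and hence for $d_J(S)$, since subsequent edges added to $I$ cannot decrease $d_I(S)$). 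Third, if $R_S = V$ and $S \notin \RR$, then $S$ remains a valid cut throughout the algorithm and is never a singleton from $U$; hence its LP constraint is $x(\de_E(S)) \ge k - d_I(S) - d_H(S)$ at every iteration, and feasibility at termination (when $E = \empt$) yields $d_J(S) \ge k - d_H(S)$. This is precisely the "furthermore" clause.

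To conclude $d_J(S) \ge k-2$ I would just take the max: if $d_H(S) = 0$ the first bound gives $k-1$; if $d_H(S) = 1$ the first bound gives exactly $k-2$; and if $d_H(S) \ge 2$ the second bound gives $d_H(S) \cdot \mu \ge 2\mu \ge k-1$. I expect no real obstacle here; the only point that requires minor care is verifying in the third case of the first bound that the LP constraint for $S$ really is the generic (non-singleton) one throughout the run, which is immediate from $S \notin \RR$ together with $S$ not being contained in any contracted set.
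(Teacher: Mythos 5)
Your proposal is correct and follows essentially the same route as the paper: the bound $d_J(S) \ge d_H(S)\cdot\mu$ from Lemma~\ref{l:h}, and the bound $d_J(S) \ge k - d_H(S) - 1$ by case analysis on whether $S \in \RR$, $S$ is strictly inside some $C \in \RR$, or $R_S = V$, invoking Lemma~\ref{l:C'}, Lemma~\ref{l:cut'}(i), and LP-feasibility at termination respectively. The only cosmetic difference is that you spell out the final $\max \ge k-2$ case analysis on $d_H(S)$, which the paper leaves implicit.
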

\begin{proof}
The bound $d_J(S) \ge d_H(S) \cdot \mu$ follows from Lemma \ref{l:h}, 
so we prove only the bound $d_J(S) \ge k-d_H(S)-1$.
For $S \in \RR$ it follows from Lemmas \ref{l:C'}, so assume that $S \notin \RR$. 
If $S$ is contained in some set in $\RR$, then let $C$ be the inclusion minimal set in $\RR$ that contains $S$.
The bound then follows from Lemma~\ref{l:cut'}(i).
Otherwise, the constraint of $S$ was relaxed by ghost edges only and then $d_J(S) \ge k-d_H(S)$.
\end{proof}

\begin{lemma} \label{l:bR'}
Let $R \in \RR$ and let $T$ be a proper subset of $\bar{R}$ such that $\RR \cup \{T,\bar{R} \sem T\}$ is laminar.
Then $d_J(T,\bar{R} \sem T) \ge \lceil (k-4)/2 \rceil$.
\end{lemma}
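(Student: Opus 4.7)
The plan is to mirror the proof of Lemma~\ref{l:bR}, replacing every use of the function from (\ref{f}) and Lemma~\ref{l:C}/Corollary~\ref{c:bound} by their primed counterparts (from (\ref{f'}), Lemma~\ref{l:C'}, and Corollary~\ref{c:bound'}), and rescaling the arithmetic accordingly. I split on whether any ghost edge crosses directly between $T$ and $\bar{R}\setminus T$.

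First I dispatch the easy case. If $d_H(T,\bar{R}\setminus T)\ge 1$, then some ghost edge $uv$ has one endpoint in $T$ and the other in $\bar{R}\setminus T$; by Lemma~\ref{l:h} it contributes a disjoint set $I(uv)$ of at least $\mu=\lceil (k-1)/2\rceil$ integral edges, each of which is a $uv$-edge and hence crosses from $T$ to $\bar{R}\setminus T$. This already gives $d_J(T,\bar{R}\setminus T)\ge \mu \ge \lceil (k-4)/2\rceil$.

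For the main case $d_H(T,\bar{R}\setminus T)=0$, every ghost edge covering $T$ (resp.\ $\bar{R}\setminus T$) must have its other endpoint in $R$, so $d_H(T)+d_H(\bar R\setminus T)=d_H(R)$. The hypothesis that $\RR\cup\{T,\bar R\setminus T\}$ is laminar implies that both $\RR\cup\{T\}$ and $\RR\cup\{\bar R\setminus T\}$ are laminar, so Corollary~\ref{c:bound'} yields
\[
d_J(T)\ge k-d_H(T)-1,\qquad d_J(\bar R\setminus T)\ge k-d_H(\bar R\setminus T)-1.
\]
For the matching upper bound on $d_J(R)$ I will use Lemma~\ref{l:C'}: at the moment $R$ is contracted we have $d_I(R)=k-d_H(R)-1$ and $|\delta_E(R)|\le 3$, and since only these $\le 3$ fractional edges at $v_R$ can subsequently enter $I$, the final degree satisfies $d_J(R)\le k-d_H(R)+2$. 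Plugging the three bounds into the inclusion--exclusion identity
\[
2d_J(T,\bar R\setminus T)=d_J(T)+d_J(\bar R\setminus T)-d_J(R)
\]
and using $d_H(T)+d_H(\bar R\setminus T)=d_H(R)$ makes the $d_H$-terms cancel, leaving $2d_J(T,\bar R\setminus T)\ge k-4$, i.e.\ $d_J(T,\bar R\setminus T)\ge \lceil (k-4)/2\rceil$.

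The main obstacle is securing the upper bound $d_J(R)\le k-d_H(R)+2$ in terms of the \emph{final} $d_H(R)$, because after $v_R$ enters $U$ it may leave $U$ by acquiring one further ghost edge, nudging $d_H(v_R)$ up by one. This is precisely controlled by Lemma~\ref{l:dH'} ($d_H(v)\le 2$ for all $v\notin U$) together with $|\delta_E(R)|\le 3$ from Lemma~\ref{l:C'}, so the bookkeeping absorbs the possible extra ghost edge without degrading the stated bound. The rest of the argument is the mechanical inclusion--exclusion computation already carried out in the proof of Lemma~\ref{l:bR}, and the figure in Fig.~\ref{f:count}(a) can be reused to visualise the counting.
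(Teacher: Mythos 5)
Your proof mirrors the paper's own argument essentially line for line: the same case split on whether $d_H(T,\bar R\setminus T)\ge 1$, the same appeal to Lemma~\ref{l:h} in the easy case, and the same inclusion--exclusion identity $2d_J(T,\bar R\setminus T)=d_J(T)+d_J(\bar R\setminus T)-d_J(R)$ combined with the lower bounds from Corollary~\ref{c:bound'} and the upper bound $d_J(R)\le k-d_H(R)+2$ from Lemma~\ref{l:C'} in the main case. The paper's proof is exactly this computation, so there is no divergence in method.

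One remark on the timing concern you raise yourself. You correctly observe that Lemma~\ref{l:C'} pins down $d_I(R)$ and $|\delta_E(R)|$ at the moment $R$ is contracted, so the bound $d_J(R)\le k-d_H(R)+2$ naturally comes with $d_H(R)$ evaluated at contraction time, whereas the cancellation $d_H(T)+d_H(\bar R\setminus T)=d_H(R)$ uses the final values. Your claim that ``the bookkeeping absorbs the possible extra ghost edge'' is asserted rather than proved: as written, Lemma~\ref{l:dH'} only gives $d_H(v_R)\le 2$ and does not by itself rule out a one-unit slack between contraction-time and final $d_H(R)$, which would shave the inclusion--exclusion bound from $k-4$ to $k-5$. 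The paper's proof does not address this explicitly either (it simply cites Lemma~\ref{l:C'}), so this is not a gap you introduced; but you should either supply the missing argument (e.g.\ show that a later ghost edge at $v_R$ forces a compensating increase in the lower bound for $d_J(T)$ or $d_J(\bar R\setminus T)$) or, if you believe the step is clean, state the snapshot of $d_H$ you are using consistently throughout the three inequalities. Beyond that the proof is complete and matches the paper's.
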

\begin{proof}
If $d_H(T,\bar{R} \sem T) \ge 1$ then by Lemma~\ref{l:h} $d_J(T,R \sem T) \ge \mu>\lceil (k-4)/2 \rceil$.
Otherwise, $d_H(R)=d_H(T)+d_H(\bar{R} \sem T)$, and since $d_J(R) \le k-d_H(C)+2$ (by Lemma~\ref{l:C'}) we get 
\begin{eqnarray*}
2d_J(T,\bar{R} \sem T) &  =  & d_J(T)+d_J(\bar{R} \sem T)-d_J(R) \\
                                         & \ge & [k-d_H(T)-1] + [k-d_H(\bar{R} \sem T)-1]-[k-d_H(R)+2] \\
															           &   =  & k-4-[d_H(T)+d_H(\bar{R} \sem T)-d_H(R)] = k-4 \ ,
\end{eqnarray*}
concluding the proof.
\end{proof}

\begin{lemma} \label{l:Rm'}
Suppose that $1 \le |\RR(S)| \le |\RR(\bar{S})|$.
\begin{enumerate}[(i)]
\item
If $R$ is a minimal set in $\RR(S)$ then 
$d_J(R \cap S,R \sem S) \ge \mu$.
\item
If $R$ is a unique maximal set in $\RR(S)$ then 
$d_J(\bar{R} \cap S, \bar{R} \sem S) \ge \lceil (k-4)/2 \rceil$.
\end{enumerate}
\end{lemma}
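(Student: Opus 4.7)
The plan is to mimic the proof of the analogous Lemma~\ref{l:Rm} from the previous section, but with the primed supporting lemmas of Section~\ref{s:main'} in place of their unprimed counterparts. Since Lemma~\ref{l:Rm'} is purely a structural consequence of the fact that $R \cap S, R \sem S$ (resp.\ $\bar{R} \cap S, \bar{R} \sem S$) avoid overlapping any set in $\RR$, no new combinatorial argument is needed; the content is entirely captured by the bounds already established in Lemma~\ref{l:cut'}(ii) and Lemma~\ref{l:bR'}.

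For part (i), I would first verify that the minimality of $R$ in $\RR(S)$ implies that $\RR \cup \{T\}$ is laminar for each $T \in \{R \cap S, R \sem S\}$. Indeed, any $R' \in \RR$ properly contained in $R$ does not overlap $S$ (by minimality), hence either $R' \subseteq R \cap S$ or $R' \subseteq R \sem S$; any $R' \in \RR$ disjoint from $R$ or containing $R$ clearly respects $T$. Viewing $R \cap S$ as a proper subset of the core $C = R$ at the moment just before $R$ is contracted, Lemma~\ref{l:cut'}(ii) gives $d_I(R \cap S, R \sem S) \ge \mu$, and since $I \subseteq J$ the required bound $d_J(R \cap S, R \sem S) \ge \mu$ follows.

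For part (ii), I would check analogously that the unique-maximality of $R$ in $\RR(S)$ implies the laminarity of $\RR \cup \{\bar{R} \cap S, \bar{R} \sem S\}$: any $R' \in \RR$ not equal to $R$ and not contained in $R$ must be disjoint from $R$ (by maximality of $R$) and does not overlap $S$ (since $R$ is the unique maximal overlapping set), so $R'$ sits inside exactly one of $\bar{R} \cap S$ and $\bar{R} \sem S$; sets contained in $R$ are automatically compatible. Writing $T = \bar{R} \cap S$, this is exactly the hypothesis of Lemma~\ref{l:bR'}, which yields $d_J(\bar{R} \cap S, \bar{R} \sem S) \ge \lceil (k-4)/2 \rceil$.

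The main (and only) obstacle is the bookkeeping verification that the laminarity conditions really hold; once this is done, the statement is an immediate invocation of Lemmas~\ref{l:cut'}(ii) and~\ref{l:bR'}. I do not expect any numerical calculation or new crossing argument to be required, because the supermodular counting that produced the constants $\mu$ and $\lceil (k-4)/2 \rceil$ has already been absorbed into those two lemmas.
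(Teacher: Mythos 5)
Your proposal follows the paper's proof essentially verbatim: for (i) you reduce to Lemma~\ref{l:cut'}(ii) and for (ii) to Lemma~\ref{l:bR'}, after checking the laminarity hypotheses. The paper's own proof is exactly this two-line reduction, so the approaches are identical.

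One word of caution on your explicit verification of laminarity in part~(ii). You write that any $R' \in \RR$ not contained in $R$ must be disjoint from $R$ ``by maximality of $R$.'' But $R$ is only assumed to be the unique maximal set in $\RR(S)$, not a maximal set of the laminar family $\RR$ itself, so a priori there could be $R' \in \RR$ with $R' \supsetneq R$ (since then $R'$ cannot overlap $S$, one is forced to have $S \subsetneq R'$, and such an $R'$ could in principle overlap $\bar{R} \sem S$). The paper uses the same terse phrase (``the maximality of $R$ implies\dots''), so your reasoning is faithful to the source; but if you wish to present the verification in full, this case needs to be ruled out or handled explicitly rather than absorbed into ``must be disjoint from $R$.''
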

\begin{proof}
We prove (i). The minimality of $R$ implies that if $T \in \{R \cap S,R \sem S\}$ 
then $\RR \cup \{T\}$ is laminar.
Therefore by Lemma \ref{l:cut'}(ii) 
$d_J(R \cap S,R \sem S) \ge \mu$.

We prove (ii). The maximality of $R$ implies that if $T \in \{\bar{R} \cap S,\bar{R} \sem S\}$ 
then $\RR \cup \{T\}$ is laminar.
Therefore by Lemma \ref{l:bR'} 
$d_J(\bar{R} \cap S,\bar{R} \sem S) \ge \lceil (k-4)/2 \rceil$.
\end{proof}

\begin{lemma} \label{l:count'}
If $1 \le |\RR(S)| \le |\RR(\bar{S})|$ then $d_J(S) \ge k-2$. 
\end{lemma}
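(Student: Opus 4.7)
The plan is to mirror the case analysis in the proof of Lemma~\ref{l:count}, but substituting the sharper bounds from Lemma~\ref{l:Rm'} that correspond to the $(3/2,k-2)$ regime. The argument will split on the combinatorial shape of $\RR(S)$: either it contains two disjoint sets, or it is a nested chain.

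First I would dispatch the easy case where $\RR(S)$ contains two disjoint sets. In that situation, $\RR(S)$ has two disjoint minimal elements $R_1,R_2$, and applying Lemma~\ref{l:Rm'}(i) to each of them gives a contribution of at least $\mu$ to $\de_J(S)$ inside $R_1$ and another at least $\mu$ inside $R_2$. Since the two contributions sit in disjoint portions of $\de_J(S)$ and $2\mu \ge k-1 > k-2$, the desired bound follows immediately.

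Next I would handle the case where $\RR(S)$ is nested. Then $\RR(S)$ has a unique maximal set $R_1$ and a unique minimal set $R_2$ with $R_2 \subs R_1$ (possibly $R_1=R_2$). Lemma~\ref{l:Rm'}(ii) applied to $R_1$ contributes at least $\lceil (k-4)/2 \rceil$ edges of $\de_J(S)$ sitting inside $\bar{R}_1$, while Lemma~\ref{l:Rm'}(i) applied to $R_2$ contributes at least $\mu=\lceil (k-1)/2 \rceil$ edges of $\de_J(S)$ sitting inside $R_2$. Since $R_2 \subs R_1$, these two edge sets are disjoint, so they add to give a lower bound on $d_J(S)$. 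A referee-style figure analogous to Fig.~\ref{f:count}(c) would make the disjointness visually obvious.

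The only real thing to verify is that the arithmetic at the end tightens to exactly $k-2$, namely
\[
\lceil (k-4)/2 \rceil + \lceil (k-1)/2 \rceil = k-2
\]
in both parities of $k$. This is the step I would watch most carefully, since it is what forces the $k-2$ (rather than $k-1$) in the statement: a one-line parity check (even $k$ gives $(k-4)/2+k/2$, odd $k$ gives $(k-3)/2+(k-1)/2$, both equal $k-2$) closes the proof. There is no further combinatorial obstacle, since Lemmas~\ref{l:cut'} and \ref{l:bR'} have already done the heavy lifting.
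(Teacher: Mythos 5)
Your proposal is correct and follows exactly the same case split and lemma applications as the paper's own proof (two disjoint minimal sets vs.\ a nested chain, using Lemma~\ref{l:Rm'}(i) and (ii) respectively). The explicit parity check that $\lceil (k-4)/2 \rceil + \lceil (k-1)/2 \rceil = k-2$ is a nice sanity verification the paper leaves implicit, but the argument is otherwise the same.
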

\begin{proof}
Suppose that there are two disjoint sets in $\RR(S)$. 
Then there are two minimal sets in $\RR(S)$ that are disjoint, 
say $R_1,R_2$; see Fig.~\ref{f:count}(b).
Then by Lemma~\ref{l:Rm'}(i)
\[
d_J(S) \ge d_J(R_1 \cap S,R_1 \sem S)+ d_J(R_2 \cap S,R_2 \sem S) \ge \mu+\mu > k-2 \ .
\]
Otherwise, $\RR(S)$ is a nested family with a unique maximal set $R_1$ and a unique minimal set $R_2$; 
see Fig.~\ref{f:count}(c) and note that $R_2 \subs R_1$ and possibly $R_1=R_2$. 
Then by Lemma~\ref{l:Rm'}
\[
d_J(S) \ge d_J(\bar{R}_1 \cap S,\bar{R}_1 \sem S)+ d_J(R_2 \cap S,R_2 \sem S) \ge \lceil (k-4)/2 \rceil + \mu=k-2 \ ,
\]
concluding the proof. 
\end{proof}

This concludes the proof of the connectivity guarantee of Algorithm~\ref{alg:main'} 
and thus also of Theorem~\ref{t:main'}.

%%%%%%%%%%%%%%%
% \bibliographystyle{plainurl}
% \bibliography{kECSS-bib}
%%%%%%%%%%%%%%%

\end{document}